\title{Constant-Delay Enumeration for Nondeterministic Document Spanners}
\author{Antoine Amarilli}{LTCI, France \and Télécom ParisTech, France \and Université Paris-Saclay, France}{}{https://orcid.org/0000-0002-7977-4441}{}
\author{Pierre Bourhis}{CNRS, CRIStAL UMR 9189, France \and Inria Lille, France}{}{https://orcid.org/0000-0001-5699-0320}{Supported by the DeLTA ANR project (ANR-16-CE40-0007).}
\author{Stefan Mengel}{CNRS, France \and CRIL UMR 8188, Lens, France}{}{https://orcid.org/0000-0003-1386-8784}{}
\author{Matthias Niewerth}{University of Bayreuth, Germany}{}{https://orcid.org/0000-0003-2032-5374}{Supported by grant number MA 4938/4–1 from the Deutsche Forschungsgemeinschaft}
\authorrunning{A.\ Amarilli, P.\ Bourhis, S.\ Mengel and M.\ Niewerth}
\keywords{Enumeration, Spanners, Automata}
\theoremstyle{theorem}
\newcommand*{\defeq}{\mathrel{\rlap{%
  \raisebox{0.3ex}{$\m@th\cdot$}}%
  \raisebox{-0.3ex}{$\m@th\cdot$}}%
  =}
\newcommand{\card}[1]{\left|{#1}\right|}
\newcommand{\calK}{\mathcal{K}}
\newcommand{\calM}{\mathcal{M}}
\newcommand{\NN}{\mathbb{N}}
\newcommand{\NP}{\textsc{NP}}
\renewcommand{\phi}{\varphi}
\newcommand{\calA}{\mathcal{A}}
\newcommand{\calV}{\mathcal{V}}
\newcommand{\f}{\mathrm{f}}
\newcommand{\var}{\mathsf{mapping}}
\newcommand{\lvar}{\mathsf{locmark}}
\newcommand{\Reach}{\mathsf{Reach}}
\newcommand{\node}{\mathsf{node}}
\newcommand{\Absize}{B}
\newcommand{\Width}{W}
\newcommand{\Widthfull}{W_{\mathrm{c}}}
\newcommand{\Depth}{D}
\newcommand{\lvl}{\mathsf{level}}
\newcommand{\JL}{\mathsf{JL}}
\newcommand{\rlvl}{\mathsf{Rlevel}}
\newcommand{\JS}{\textsc{Jump}}
\newcommand{\spa}[2]{[#1,#2\rangle}
\newcommand{\open}[1]{#1\,{\vdash}}
\newcommand{\close}[1]{{\dashv}\,#1}
\newcommand{\nextlevel}{\textsc{NextLevel}\xspace}
\newcommand{\kUCQ}{k\mathsf{\text{-}UCQ}}
\begin{document}

\maketitle

\begin{abstract}
  We consider the information extraction framework known as \emph{document spanners}, 
  and study the problem of efficiently computing the
  results of the extraction from an input document, where the extraction
  task is described as a sequential \emph{variable-set automaton} (VA).
  We pose this problem in the
  setting of enumeration algorithms, where we can first run a preprocessing phase
  and must then produce the results with a small delay between any two consecutive
  results. Our goal is to have an algorithm which is tractable in combined
  complexity, i.e., in the sizes of the input document and the VA;
  while ensuring the best possible
  data complexity bounds in the input document size, i.e., constant delay in
  the document size. Several recent works at PODS'18 proposed such algorithms but with linear
  delay in the document size or with an exponential dependency in size of the (generally
  nondeterministic) input VA. 
  In particular, Florenzano et al.\ suggest that our desired runtime guarantees cannot be met
  for general sequential VAs.
  We refute this and show that,
  given a nondeterministic sequential VA and an input
  document, we can enumerate the mappings of the VA on the document with the
  following bounds: the 
  preprocessing is linear in the document size and polynomial in the size of the
  VA, and the delay is independent of the document and polynomial in the size of
  the VA. The resulting algorithm thus achieves tractability in combined complexity and the best
  possible data complexity bounds. Moreover, it is rather easy to describe, in particular for 
  the restricted case of so-called extended VAs.
\end{abstract}

\section{Introduction}

Information extraction from text documents is an important problem in data management.
One approach to this task has recently attracted a lot of attention: it uses \emph{document spanners},
a declarative logic-based approach first implemented by IBM in their 
tool SystemT~\cite{systemT} and whose core semantics have then been formalized in~\cite{FaginKRV15}.
The spanner approach 
uses variants of regular expressions (e.g.~\emph{regex formulas} with
variables), compiles them
to variants of finite automata (e.g., \emph{variable-set
automata}, for short \emph{VAs}),
and evaluates them on the input document to extract the data of interest.
After this extraction phase, algebraic operations like joins, unions and projections 
can be performed.
The formalization of the spanner framework in~\cite{FaginKRV15} has led to a
thorough investigation of its properties by the
theoretical database community \cite{Freydenberger17,FreydenbergerKP18,MaturanaRV18,FreydenbergerH18,FlorenzanoRUVV18}.

We here consider the basic task in the spanner framework of efficiently computing the results
of the extraction, i.e., computing without duplicates all tuples of ranges of the input
document (called \emph{mappings}) that satisfy the conditions described by a VA.
As many algebraic operations 
can also be compiled into VAs~\cite{FreydenbergerKP18},
this task actually solves the whole data extraction problem for so-called \emph{regular spanners}~\cite{FaginKRV15}.
While the extraction task is intractable for general VAs~\cite{Freydenberger17}, 
it is known to be tractable if we impose that the VA is \emph{sequential}~\cite{FreydenbergerKP18,FlorenzanoRUVV18},
which requires that all accepting runs actually describe a 
well-formed mapping; we will make this assumption throughout our work.
Even then, however, it may still be unreasonable in practice to materialize all mappings:
if there are $k$ variables to extract, then mappings are $k$-tuples and there
may be up to $n^k$ mappings on an input document of size~$n$, which is
unrealistic if~$n$ is large.
For this reason, recent works~\cite{MaturanaRV18,FlorenzanoRUVV18,FreydenbergerKP18} 
have studied the extraction task
in the setting of \emph{enumeration algorithms}: instead of
materializing all mappings, we
enumerate them one by one 
while ensuring that the \emph{delay} between two results is always small.
Specifically, \cite[Theorem~3.3]{FreydenbergerKP18} has shown how to enumerate
the mappings with delay linear in the input document and quadratic in the
VA, i.e., given a document~$d$ and a 
functional VA $A$ (a subclass of sequential VAs),
the delay is
$O(\card{A}^2 \times \card{d})$.

Although this result ensures tractability in both the size of the input document and the automaton, 
the delay may still be long as~$\card{d}$ is generally very large. By contrast, enumeration
algorithms for database tasks
often enforce stronger
tractability guarantees in data complexity~\cite{Segoufin14,Wasa16}, in
particular \emph{linear preprocessing} and \emph{constant delay}
(when measuring complexity
in the RAM model with uniform cost measure~\cite{AhoHU74}).
Such algorithms consist of two phases: 
a \emph{preprocessing phase} which precomputes an index data structure in linear
data complexity, %
and an \emph{enumeration phase} which produces all results so that the delay
between any two consecutive results is always \emph{constant}, i.e., independent
from the input data.
It was recently shown in~\cite{FlorenzanoRUVV18} that this strong guarantee could be achieved when
enumerating the mappings of VAs if we only focus on data complexity, i.e., 
for any \emph{fixed} 
VA, we can enumerate its
mappings with linear preprocessing and constant delay in the input document.
However, the preprocessing and delay
in~\cite{FlorenzanoRUVV18}
are exponential in the VA
because they first determinize it~\cite[Propositions~4.1~and~4.3]{FlorenzanoRUVV18}.
This is problematic because the VAs constructed from regex
formulas~\cite{FaginKRV15} are generally nondeterministic.

Thus, to efficiently enumerate the results of the extraction, we would ideally
want to have the best of both worlds:
ensure that the \emph{combined complexity} (in the sequential VA and in the document) remains polynomial,
while ensuring that the \emph{data complexity} (in the document) is as small as
possible, i.e., linear time for the preprocessing phase and constant time for the delay of the enumeration phase. 
However, up to now, there was no known algorithm to satisfy these requirements while working on nondeterministic sequential VAs.
Further, it was conjectured that such an algorithm is 
unlikely to exist~\cite{FlorenzanoRUVV18} because the 
related task of \emph{counting} the number of mappings is \textsc{SpanL}-hard
for such VAs.

The question of nondeterminism is also unsolved for the related problem of enumerating the results of monadic second-order (MSO) queries on words and trees:
there are several approaches for this task where the query is given as an automaton,
but they require the automaton to be deterministic~\cite{bagan2006mso,amarilli2017circuit_extended}
or their delay is not constant in the input document~\cite{losemann2014mso}.
Hence, also in the context of MSO enumeration, it is not known whether we can achieve linear preprocessing and constant delay in data complexity
while remaining tractable in the (generally non-deterministic) automaton. The
result that we will show in the present paper will imply that we can achieve
this for MSO queries on words when all free variables are first-order, with the
query being represented as a generally non-deterministic sequential VA, or as a
sequential regex-formula with capture variables: note that an extension to trees
is investigated in our follow-up work~\cite{amarilli2019enumeration}.

\subparagraph*{Contributions.}
In this work, 
we show that nondeterminism is in fact not an obstacle 
to enumerating the results of document spanners: we present an algorithm that enumerates the mappings of a nondeterministic sequential VA in polynomial combined complexity 
while ensuring linear preprocessing and constant delay in the input document. This answers the open question of~\cite{FlorenzanoRUVV18}, and improves on the bounds of~\cite{FreydenbergerKP18}. More precisely, we show:

\begin{theorem}
  \label{thm:main}
  Let $2 \leq \omega \leq 3$ be an exponent for Boolean matrix multiplication.
  Let $\calA$ be a sequential VA with variable set $\calV$ and with
  state set~$Q$, and let $d$ be an input document. We can
  enumerate the mappings
  of~$\calA$ on~$d$ with preprocessing time in
  $O((\card{Q}^{\omega+1} + \card{\calA}) \times \card{d})$ and with delay
  $O(\card{\calV} \times (\card{Q}^2 + \card{\calA} \times \card{\calV}^2))$, i.e.,
  linear preprocessing and constant delay in the input
  document, and polynomial preprocessing and delay in the input VA.
\end{theorem}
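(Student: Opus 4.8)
The plan is to first reduce general sequential VAs to \emph{extended VAs}, whose transitions are labelled by whole sets of variable markers rather than single markers, since the construction is cleanest there; a sequential VA can be turned into an equivalent extended VA with only polynomial blow-up in $\card{Q}$ and $\card{\calV}$, in a way that preserves the set of mappings. I would then form the product of the extended VA with the document $d$, obtaining a layered DAG whose nodes are configurations $(q,i)$ (state $q$, position $i$), with letter-reading edges that advance the position and marker edges that stay at the same position and emit a marker set. Every path from an initial configuration at position $0$ to a final configuration at position $\card{d}$ spells out an accepting run, and reading off the emitted marker sets along the path yields a mapping of $\calA$ on $d$.

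Next I would \emph{trim} this DAG, keeping only configurations that are both reachable from an initial configuration and co-reachable to a final one; after trimming, every partial path extends to some accepting path, so the enumeration never explores a dead end. The enumeration itself produces mappings by a depth-first traversal of the tree of \emph{partial mappings}, where a node is a sequence of marker events (a position together with the marker set placed there) and branching corresponds to choosing the next event. The obstacle created by nondeterminism is that many distinct runs, hence many DAG paths, yield the \emph{same} mapping, so enumerating paths would produce duplicates, and determinizing to fix this would cost exponential time in $\card{Q}$. I would avoid this by maintaining, at each node of the traversal, the single \emph{set} of states reachable by runs consistent with the current partial mapping --- the subset construction performed lazily, one subset at a time --- and by enumerating distinct marker-event sequences rather than runs. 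Updating this reachable set after one marker event costs $O(\card{Q}^2)$, and examining which marker sets can be emitted and along which transitions costs $O(\card{\calA}\times\card{\calV}^2)$; since an extended-VA mapping has $O(\card{\calV})$ events, this accounts for the claimed delay $O(\card{\calV}\times(\card{Q}^2+\card{\calA}\times\card{\calV}^2))$.

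The remaining difficulty is that between two consecutive marker events a run may read an arbitrarily long marker-free stretch of $d$, which we cannot afford to scan within a delay independent of $\card{d}$. The key idea is a \emph{jump} operation: from the current reachable set at position $i$, compute directly the next position $j$ at which a marker can be placed, together with the reachable set just before position $j$, in time polynomial in $\card{Q}$ and independent of $\card{d}$. To support jumps without a dependence on $\card{d}$, I would precompute, during the preprocessing phase, reachability information for letter-only runs between document positions using Boolean matrix multiplication; performing the relevant matrix products at each of the $\card{d}$ positions yields the preprocessing bound $O((\card{Q}^{\omega+1}+\card{\calA})\times\card{d})$, the additive $\card{\calA}$ term covering the one-time construction and trimming of the product.

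The main obstacle is squarely the interaction of nondeterminism with the constant-delay requirement: we must output each mapping exactly once even though exponentially many runs may realize it, yet we can neither pay the exponential price of determinization nor scan long marker-free regions of the document. The lazy single-subset tracking resolves the first tension and the precomputed matrix-based jump tables resolve the second, but the technical heart is the correctness argument --- proving that the traversal visits each mapping exactly once, that every branch explored is extendable to a full mapping (so that the delay is genuinely bounded with no hidden dependence on $\card{d}$), and that the jump function correctly and efficiently recovers both the next marker position and the updated reachable set from the precomputed data.
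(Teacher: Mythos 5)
Your overall architecture---the leveled product DAG of automaton and document, trimming to accessible/co-accessible vertices, tracking the \emph{set} of states consistent with the current partial mapping (the paper's level sets), and matrix-multiplication-based jump tables computed in preprocessing---is exactly the paper's construction for \emph{extended} VAs, and your accounting of the preprocessing bound and of the $O(\card{Q}^2)$ per-step cost is sound for that case. The genuine gap is your very first step: the claim that a sequential VA can be converted into an equivalent extended VA with only polynomial blow-up in $\card{Q}$ and $\card{\calV}$ is false, and the paper explicitly relies on its falsity (citing Proposition~4.2 of Florenzano et al.) as the motivation for its Section~\ref{sec:flashlight}. At a single document position, a non-extended sequential VA reads a set of markers by chaining several single-marker transitions, and the family of distinct readable marker sets between two states can be exponential in $\card{\calV}$; an extended VA must materialize each such set as its own ev-transition, so the smallest equivalent extended VA can be exponentially larger. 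With this reduction gone, your claimed delay bound collapses: the $O(\card{\calA} \times \card{\calV}^2)$ term would have to be measured in the size of the exponentially large extended VA, not of~$\calA$.

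What the paper does instead---and what is missing from your proposal---is to run the algorithm directly on the non-extended product DAG, where marker edges carry singleton labels, and to solve the resulting subproblem: at a given level, enumerate \emph{without duplicates} the distinct sets $S^+$ of markers that label some path of marker edges from the current level set to a vertex with an outgoing $\epsilon$-edge, even though exponentially many such paths may spell out the same set. The paper solves this by flashlight search: a depth-first walk of the binary decision tree over the possible markers, pruned using Lemma~\ref{lem:extendToPath}, which decides in time $O(\card{G} \times \card{S^+})$ whether a partial commitment (markers in $S^+$ required, markers in $S^-$ forbidden) extends to an actual path. This is precisely what yields the $O(\card{\calA} \times \card{\calV}^2)$ per-step cost that you assert but do not justify; your phrase ``examining which marker sets can be emitted'' presupposes exactly this machinery. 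Without it (or an equivalent duplicate-free enumeration device for marker sets within a level), your traversal either outputs duplicate mappings or pays an exponential price in $\card{\calV}$.
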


The existence of such an algorithm is surprising but in hindsight not entirely
unexpected: remember that, in formal language theory, when we are given a word
and a nondeterministic finite automaton, then we can evaluate the automaton on
the word with tractable combined complexity by determinizing the automaton ``on
the fly'', i.e., computing at each position of the word the set of states where
the automaton can be.
Our algorithm generalizes this intuition, and extends it to the task of
enumerating mappings without duplicates: we first present it for so-called
\emph{extended sequential VAs}\footnote{Note that, contrary to what the
terminology suggests, VAs are not special cases of extended VAs.
Further, while extended VAs can be converted in PTIME to VAs, the
converse is not true as there are extended VAs for which the smallest
equivalent VA has exponential size~\cite{FlorenzanoRUVV18}.}, a variant of sequential VAs introduced
in~\cite{FlorenzanoRUVV18}, before generalizing it to sequential VAs.
Our overall approach is to construct a kind of product of the input document
with the extended VA, similarly to~\cite{FlorenzanoRUVV18}. We then use several tricks to ensure the constant delay bound despite nondeterminism; in particular we precompute a \emph{jump function} that allows us to skip quickly the parts of the document where no variable can be assigned. The resulting algorithm is rather simple and has no large hidden
constants.
Note that our enumeration algorithm does not contradict the counting
hardness results of~\cite[Theorem~5.2]{FlorenzanoRUVV18}: while our
algorithm \emph{enumerates} mappings with constant delay and without
duplicates, we do not see a way to adapt it to \emph{count} the
mappings efficiently.  This is similar to the enumeration and counting problems
for maximal cliques: we can enumerate maximal cliques with
polynomial delay~\cite{tsukiyama1977new}, but counting them is
\#P-hard~\cite{valiant1979complexity}.

To extend our result to sequential VAs that are not extended, one possibility
would be to convert them to extended VAs, but this necessarily entails an exponential
blowup \cite[Proposition~4.2]{FlorenzanoRUVV18}. We avoid this by adapting our
algorithm to work with non-extended sequential VAs directly.
Our idea for this is to efficiently enumerate 
at each position
the possible sets of markers that can be assigned by the VA: we do so by
enumerating paths in the VA, relying on the fact that the VA is sequential so
these paths are acyclic. The challenge is that the same set of markers can be
captured by many different paths, but we explain how we can explore efficiently
the set of distinct paths with a technique known as \emph{flashlight
search}~\cite{mary2016efficient,read1975bounds}: the key idea is that we can
efficiently determine which partial sets of markers can be extended to the label
of a path (Lemma~\ref{lem:extendToPath}). 

Of course, our main theorem (Theorem~\ref{thm:main}) implies analogous results
for all spanner formalisms that can be translated to sequential VAs.
In particular, spanners are not usually written as automata by users, but
instead given in a form of regular expressions called \emph{regex-formulas},
see~\cite{FaginKRV15} for exact definitions. 
As we can translate sequential regex-formulas to sequential VAs in linear time~\cite{FaginKRV15,FreydenbergerKP18,MaturanaRV18}, our results imply that we
can also evaluate them:

\begin{corollary}
  Let $2 \leq \omega \leq 3$ be an exponent for Boolean matrix multiplication.
  Let $\phi$ be a sequential regex-formula with variable set $\calV$, and let $d$ be an input document. We can
  enumerate the mappings
  of~$\phi$ on~$d$ with preprocessing time in
  $O(\card{\phi}^{\omega+1} \times \card{d})$ and with delay
  $O(\card{\calV} \times (\card{\phi}^2 + \card{\phi} \times \card{\calV}^2))$, i.e.,
  linear preprocessing and constant delay in the input
  document, and polynomial preprocessing and delay in the input regex-formula.
\end{corollary}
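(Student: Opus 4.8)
The plan is to reduce the corollary directly to Theorem~\ref{thm:main} by compiling the regex-formula into a sequential VA and then substituting the size bounds. The only non-trivial ingredient is the translation itself, which is a known result: a sequential regex-formula $\phi$ can be converted in linear time into an equivalent sequential VA $\calA$ over the same variable set $\calV$, with $\card{\calA} = O(\card{\phi})$~\cite{FaginKRV15,FreydenbergerKP18,MaturanaRV18}. First I would invoke this translation, taking care to check that it produces a VA that is genuinely \emph{sequential} (so that Theorem~\ref{thm:main} applies) and that both the total size $\card{\calA}$ and, crucially, the number of states $\card{Q}$ are linear in $\card{\phi}$; a Thompson- or Glushkov-style construction achieves exactly this, since each operator and each symbol of $\phi$ contributes only a constant number of states and transitions. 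The cost $O(\card{\phi})$ of performing the translation is absorbed into the preprocessing phase.

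Once $\calA$ is in hand, the remaining step is purely arithmetic: I would apply Theorem~\ref{thm:main} to $\calA$ and $d$, and substitute $\card{Q} = O(\card{\phi})$ and $\card{\calA} = O(\card{\phi})$ into its two bounds. The preprocessing bound $O((\card{Q}^{\omega+1} + \card{\calA}) \times \card{d})$ becomes $O((\card{\phi}^{\omega+1} + \card{\phi}) \times \card{d})$, which collapses to $O(\card{\phi}^{\omega+1} \times \card{d})$ because $\omega + 1 \geq 3 > 1$ whenever $2 \leq \omega$. The delay bound $O(\card{\calV} \times (\card{Q}^2 + \card{\calA} \times \card{\calV}^2))$ becomes $O(\card{\calV} \times (\card{\phi}^2 + \card{\phi} \times \card{\calV}^2))$ after substitution, matching the claimed delay verbatim. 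Since the mappings of $\calA$ on $d$ coincide with the mappings of $\phi$ on $d$, enumerating the former is the same as enumerating the latter, which closes the argument.

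The main (and essentially the only) obstacle is verifying that the cited translation delivers a VA that is simultaneously sequential and of linear size in both $\card{\calA}$ and $\card{Q}$; everything else is bookkeeping. In particular, I would make sure that sequentiality of $\phi$ is precisely what guarantees sequentiality of $\calA$, so that the hypothesis of Theorem~\ref{thm:main} is met, and that no superlinear blow-up in the state count is hidden in the construction, which would otherwise worsen the $\card{\phi}^{\omega+1}$ factor.
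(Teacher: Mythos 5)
Your proposal is correct and matches the paper's own argument: the corollary is obtained exactly by citing the known linear-time translation from sequential regex-formulas to sequential VAs (which preserves sequentiality and yields linearly many states) and then substituting $\card{Q} = O(\card{\phi})$ and $\card{\calA} = O(\card{\phi})$ into the bounds of Theorem~\ref{thm:main}. The checks you flag (sequentiality preservation and no hidden state blow-up) are precisely what the cited translations guarantee, so nothing further is needed.
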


Another direct application of our result is for so-called \emph{regular spanners} which are unions of conjunctive queries (UCQs) posed on regex-formulas, i.e., the closure of regex-formulas under union, projection and joins. We again point the reader to~\cite{FaginKRV15,FreydenbergerKP18} for the full definitions. 
As such UCQs can in fact be evaluated by VAs, our result also implies tractability for such representations, as long as we only perform a bounded number of joins:

\begin{corollary}
 For every fixed $k \in \NN$, let $\kUCQ$ denote the class of document spanners
  represented by UCQs over functional regex-formulas with at most $k$
  applications of the join operator. Then the mappings of a spanner in $\kUCQ$
  can be enumerated with linear preprocessing and constant delay in the document
  size, and with polynomial preprocessing and delay in the size of the spanner representation.
\end{corollary}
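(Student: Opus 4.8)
The plan is to reduce the statement to Theorem~\ref{thm:main} by compiling any spanner in $\kUCQ$ into a single sequential VA whose size is polynomial in the representation (for fixed~$k$), and then running the enumeration algorithm of Theorem~\ref{thm:main} on that VA. This works because the three operations that build a $\kUCQ$ spanner — union, projection, and join — can each be realized directly on sequential VAs, and because fixing~$k$ keeps the only size-increasing operation (the join) under control.

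Concretely, I would first translate every functional regex-formula appearing in the UCQ into a sequential (in fact functional) VA in linear time, using the standard compilation~\cite{FaginKRV15,FreydenbergerKP18,MaturanaRV18}. I would then process each conjunctive query bottom-up. For a projection, I delete the markers of the projected-out variables from the VA; the accepting runs then describe valid mappings over the remaining variables, so the VA stays sequential and its size does not grow. For a join of two spanners, I take the usual product of their two VAs, synchronizing the placement of the markers of the \emph{shared} variables so that each such variable is opened and closed exactly once along any accepting run; the result represents the natural join, is again sequential, and has size the product of the two input sizes. Finally, since the disjuncts of a UCQ share the same set of output variables~$\calV$, I combine them into one VA by a disjoint union of automata, whose size is the sum of the component sizes and which is again sequential.

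The crux is the size bookkeeping. Each CQ uses at most $k$ joins, so its VA is a product of at most $k+1$ regex-formula VAs and thus has size $O(\prod_i \card{A_i})$, bounded by a polynomial of degree $k+1$ in the representation size; projection does not increase this, and the final union merely adds the polynomially many CQ-VAs. Hence the resulting sequential VA $\calA$ has $\card{Q}$ and $\card{\calA}$ polynomial in the size of the spanner representation for every fixed~$k$, while $\card{\calV}$ is at most the representation size. Substituting these polynomial bounds into Theorem~\ref{thm:main} yields linear preprocessing and constant delay in the document, together with polynomial preprocessing and delay in the representation, which is exactly the claim.

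The main obstacle I expect is not the asymptotics but the correctness and sequentiality of the join construction: one must argue carefully that synchronizing the shared-variable markers in the product captures precisely the mappings of the natural join, and that every accepting run of the product still describes a well-formed mapping over the union of the two variable sets, so that the product VA is genuinely sequential as required by Theorem~\ref{thm:main}. The additive behavior of union and the non-increase under projection are comparatively routine, and the dependence on~$k$ is harmless precisely because $k$ is fixed, which turns the multiplicative blowup of iterated joins into a polynomial of bounded degree.
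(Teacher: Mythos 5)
Your proposal is correct and takes essentially the same route as the paper: the paper obtains this corollary by compiling the UCQ (union, projection, and at most $k$ joins over functional regex-formulas) into a single sequential VA whose size is polynomial for fixed~$k$, relying on the known closure constructions for these operators, and then invoking Theorem~\ref{thm:main}. The details you work out for the join product, projection, and disjoint union are exactly the content the paper delegates to the cited works~\cite{FaginKRV15,FreydenbergerKP18}.
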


\subparagraph*{Paper structure.}
In Section~\ref{sec:prelim}, we formally define spanners, VAs, 
and the enumeration problem that we want to solve on them.
In Sections~\ref{sec:extended}--\ref{sec:jump},
we prove our main result (Theorem~\ref{thm:main}) for \emph{extended}
VAs, where the sets of variables that can be assigned at each position are
specified explicitly. We first describe in
Section~\ref{sec:extended} the main part of our preprocessing phase, which
converts the extended VA and input document to a \emph{mapping DAG} whose paths
describe the mappings that we wish to enumerate. We then
describe in Section~\ref{sec:enum} how to enumerate these paths, up to having
precomputed a so-called \emph{jump function} whose computation is explained in 
Section~\ref{sec:jump}.
Last, we adapt our scheme in Section~\ref{sec:flashlight} for
sequential VAs that are not extended.
We conclude in Section~\ref{sec:conc}.

\section{Preliminaries}
\label{sec:prelim}

\subparagraph*{Document spanners.}
We fix a finite alphabet $\Sigma$.
A \emph{document} $d = d_0 \cdots d_{n-1}$ is just a word over~$\Sigma$.
A \emph{span} of $d$ is a pair $\spa{i}{j}$ with $0 \le i \le j \le |d|$ which
represents a substring (contiguous subsequence) of $d$ starting at position $i$ and ending at position $j-1$.
To describe the possible results of an information extraction task, we will use
a finite set $\calV$ of variables, and define a result as a \emph{mapping} from
these variables to spans of the input document. Following~\cite{FlorenzanoRUVV18,MaturanaRV18} but in contrast to~\cite{FaginKRV15}, we will not require mappings to assign all variables: formally, a
\emph{mapping} of~$\mathcal V$ on~$d$ is a function $\mu$ from some domain $\calV'
\subseteq \calV$ to spans of~$d$.
We define a \emph{document spanner} to be a function 
assigning to every input document $d$ a set of
mappings, which denotes the set of results of the extraction task on the document~$d$.

\subparagraph*{Variable-set automata.}
We will represent document spanners using \emph{variable-set automata} (or
\emph{VAs}).
The transitions of a VA can carry 
letters of~$\Sigma$ or
\emph{variable markers}, which are either of the form $\open x$ for a variable $x\in \mathcal V$ (denoting the start of the span assigned to~$x$) or $\close x$ (denoting its end).
Formally, a \emph{variable-set automaton} $\mathcal A$ (or VA) is then defined to be an
automaton $\mathcal A= (Q,q_0, F, \delta)$ where the transition relation
$\delta$ consists of \emph{letter transitions} of the form $(q, a, q')$ for $q,
q'\in Q$ and $a\in \Sigma$, and of \emph{variable transitions} of the form $(q,
\open x, q')$ or $(q, \close x, q')$ for $q, q'\in Q$ and $x\in \mathcal V$. A
\emph{configuration} of a VA is a pair $(q,i)$ where $q\in Q$ and $i$ is a
position of the input document~$d$. A \emph{run} $\sigma$ of $\mathcal A$ on~$d$
is then a sequence  of configurations
\[(q_0,i_0) \xrightarrow{\sigma_1} (q_1, i_1) \xrightarrow{\sigma_2} \cdots
\xrightarrow{\sigma_m} (q_m, i_m)\] where $i_0=0$, $i_m= |d|$, and where for
every $1 \leq j \leq m$:
\begin{itemize}
  \item Either $\sigma_j$ is a letter of~$\Sigma$, we have $i_j = i_{j-1}+1$, we have $d_{i_{j-1}} = \sigma_j$, and $(q_{j-1}, \sigma_j, q_j)$ is a letter transition of~$\calA$;
  \item Or $\sigma_j$ is a variable marker, we have $i_j = i_{j-1}$, and $(q_{j-1}, \sigma_j, q_j)$ is a variable transition of~$\calA$. In this case we say that the variable marker $\sigma_j$ is \emph{read} at position~$i_j$.
\end{itemize}
As usual, we say that a run is \emph{accepting} if $q_m\in F$.
A run is \emph{valid} if it is accepting, every
variable marker is read at most once, if an open marker $\open x$ is read at a
position $i$ then the corresponding close marker $\close x$ is read at a
position $i'$ with $i \le i'$, and if $\open x$ is not read then $\close x$ is not
read either.
From each valid run, we define a mapping where each variable $x \in \calV$ is mapped to the span $\spa{i}{i'}$ such that $\open x$ is read at position~$i$ and $\close x$ is read at position~$i'$; if these markers are not read then $x$ is not assigned by the mapping (i.e., it is not in the domain~$\calV'$).
The \emph{document spanner} of the VA $\calA$ is then 
the function that assigns to every document~$d$ the set of mappings defined by
the valid runs of $\calA$ on $d$: note that the same mapping can be defined by
multiple different runs.
The task studied in this paper is the following:
given a VA $\calA$ and a document $d$, enumerate \emph{without duplicates} the mappings that are assigned
to $d$ by the document spanner of~$\calA$. The enumeration must write each
mapping as a set of pairs $(m, i)$ where $m$ is a variable marker and $i$ is a position of~$d$.

\subparagraph*{Sequential VAs.}
We cannot hope to efficiently enumerate the mappings of arbitrary VAs because it is already 
\NP-complete to decide if,
given a VA $\mathcal A$ and a document $d$, there are any valid runs of~$\mathcal A$
on $d$~\cite{Freydenberger17}. 
For this reason,
we will restrict ourselves to so-called \emph{sequential} VAs~\cite{MaturanaRV18}. A VA $\calA$ is
\emph{sequential} if for every document $d$, every accepting run of~$\calA$ of~$d$ is also valid: this implies that the document spanner of~$\calA$ can simply be defined following the accepting runs of~$\calA$.
If we are given a VA, then we can test in NL whether it is sequential
\cite[Proposition~5.5]{MaturanaRV18}, and otherwise we can convert it to an equivalent sequential VA
(i.e., that defines the same document spanner) with
an unavoidable exponential blowup in the number of variables (not in the number
of states), using existing results:

\begin{proposition}
  \label{prp:makesequential}
  Given a VA $\calA$ on variable set $\calV$, letting $k \colonequals \card{\calV}$ and $r$ be the number of states of~$\calA$, we can compute an equivalent sequential VA $\calA'$ with $3^k r$ states. Conversely, for any $k \in \NN$, there exists a VA $\calA_k$ with 1 state on a variable set with $k$ variables such that any sequential VA equivalent to~$\calA_k$ has at least $3^k$ states.
\end{proposition}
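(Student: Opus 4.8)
The plan is to prove the two directions separately: the upper bound by an explicit product construction, and the converse lower bound by a fooling-set argument on the runs of any equivalent sequential VA.

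For the upper bound, I would build $\calA'$ as a product of $\calA$ with a component that records, for each variable $x \in \calV$, one of three possible \emph{marker states}: $x$ has not yet been opened (``fresh''), $x$ has been opened but not closed (``open''), or $x$ has been both opened and closed (``done''). A state of $\calA'$ is thus a pair $(q, S)$ where $q$ is a state of $\calA$ and $S$ assigns a marker state to each of the $k = \card{\calV}$ variables, giving exactly $3^k r$ states. Letter transitions of $\calA$ are copied verbatim and leave $S$ unchanged; a transition reading $\open x$ is allowed only from marker state ``fresh'' and moves $x$ to ``open''; a transition reading $\close x$ is allowed only from ``open'' and moves $x$ to ``done''. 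The initial marker state is ``fresh'' for every variable, and a state $(q,S)$ is accepting iff $q$ is accepting in $\calA$ and no variable is ``open'' in $S$. These constraints guarantee that each marker is read at most once, that $\close x$ is never read before $\open x$, and that every opened variable is eventually closed, so every accepting run of $\calA'$ is valid and $\calA'$ is sequential. Forgetting the $S$-component sends accepting runs of $\calA'$ to valid runs of $\calA$ and back, preserving the induced mapping, so $\calA'$ and $\calA$ are equivalent; the only thing to check carefully is this run-correspondence, which is routine.

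For the lower bound, I would take $\calA_k$ to be the one-state VA over $\Sigma = \{a\}$ whose single state $q_0$ is both initial and final and carries self-loops for the letter $a$ and for all markers $\open x, \close x$ with $x \in \calV$; its valid runs realise \emph{every} mapping, so its spanner is ``universal''. Fix the document $d = aa$ and, for each function $f$ from $\calV$ to the three marker states, define a mapping $\mu_f$ in which a ``done'' variable is sent to the span $\spa{0}{0}$, an ``open'' variable to $\spa{0}{2}$, and a ``fresh'' variable to $\spa{2}{2}$. There are $3^k$ such mappings, and since the spanner is universal, any sequential VA $\calA'$ equivalent to $\calA_k$ has an accepting run realising each $\mu_f$; because $\mu_f$ places all markers at position $0$ or $2$, the state $s_f$ reached at position $1$ of this run is well defined. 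I would then show $s_f \neq s_g$ for $f \neq g$: if they were equal, one could splice the prefix (up to position $1$) of the run for $\mu_f$ with the suffix (from position $1$) of the run for $\mu_g$, obtaining an accepting run of $\calA'$; a short case analysis on a variable where $f$ and $g$ differ shows that in at least one of the two splicings this run reads some marker twice, or reads $\close x$ without $\open x$, or leaves a variable open, hence is \emph{not} valid, contradicting sequentiality. This yields $3^k$ pairwise distinct states.

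I expect the main obstacle to be the lower-bound case analysis: one must choose the three span encodings so that, for every pair of differing values, at least one of the two splicings is genuinely invalid (the delicate case is ``fresh'' versus ``done'', where the forward splicing can accidentally remain valid, but the reverse splicing then reads a pair of markers twice), and one must argue that the cut configuration at position $1$ is canonical precisely because no marker of $\mu_f$ is read there. The upper bound, by contrast, is a direct bookkeeping argument.
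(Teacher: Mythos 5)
Your proposal is correct and takes essentially the same approach as the paper: the upper bound is the identical product construction that records for each variable whether it is unseen, opened, or closed (giving $3^k r$ states), and the lower bound---which the paper simply delegates to a citation of Freydenberger's Proposition~3.9---is the same underlying idea that any equivalent sequential VA must remember all $3^k$ variable statuses, which you work out in a self-contained way via the one-state universal VA and a splicing/fooling argument. Your case analysis is sound (in particular you correctly flag the ``fresh'' versus ``done'' case, where only the reverse splicing is invalid), so there is no gap; you have merely replaced the paper's external reference with an explicit proof.
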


\begin{proof}
  This can be shown exactly like
\cite[Proposition~12]{Freydenberger17} and
\cite[Proposition~3.9]{Freydenberger17b}. In short,
  the upper bound is shown by modifying~$\calA$ to remember in the automaton state which
  variables have been opened or closed, and by re-wiring the
  transitions to ensure that the run is valid: this creates $3^k$ copies of
  every state because each variable can be either unseen, opened, or closed. For the lower bound, 
  \cite[Proposition~3.9]{Freydenberger17b} gives a VA for which any equivalent
  sequential VA must remember the status of all variables in this way.
\end{proof}

All VAs studied in this work will be sequential, and we will further assume
that they are 
\emph{trimmed} in the sense that for every state $q$ there is a document $d$ and
an accepting run of the VA where the state~$q$ appears. This condition can
be enforced in linear time on any sequential VA: we do a graph traversal to identify the
accessible states (the ones that are reachable from the initial
state), we do another graph traversal to identify the co-accessible states (the ones
from which we can reach a final state), and we remove all states that are not accessible or not co-accessible.
We will implicitly assume that all sequential VAs have been trimmed, which 
implies that they cannot contain any cycle of variable
transitions (as such a cycle would otherwise appear in a run, which would not be
valid).

\subparagraph*{Extended VAs.}
We will first prove our results for a variant of sequential VAs
introduced by~\cite{FlorenzanoRUVV18}, called sequential \emph{extended VAs}. An
extended VA on alphabet $\Sigma$ and variable set $\calV$ is an 
automaton $\mathcal A= (Q,q_0, F, \delta)$ where the transition relation
$\delta$ consists of \emph{letter transitions} as before, and of
\emph{extended variable transitions} (or \emph{ev-transitions})
of the form $(q, M, q')$ where $M$ is a possibly empty set of variable markers. Intuitively,
on ev-transitions, the automaton reads multiple markers at once.
Formally, a \emph{run} $\sigma$ of $\mathcal A$ on~$d = d_0 \cdots d_{n-1}$
is a sequence of configurations (defined like before) where letter
transitions and ev-transitions alternate:
\[\hfill
  (q_0,0) \xrightarrow{\smash{M_0}} (q_0', 0) \xrightarrow{\smash{d_0}}
  (q_1,1) \xrightarrow{\smash{M_1}} (q_1', 1) \xrightarrow{\smash{d_1}}
  \cdots 
  \xrightarrow{\smash{d_{n-1}}} (q_n, n) \xrightarrow{\smash{M_n}} (q_n', n)\quad\hfill
\] 
where $(q_i', d_i, q_{i+1})$ is a letter transition of~$\calA$ for all $0
\leq i < n$, and $(q_i, M_i, q'_i)$ is an \mbox{ev-transition} of~$\calA$ for all $0
\leq i \leq n$ where $M_i$ is the set of variable markers \emph{read} at
position~$i$.
Accepting and valid runs are defined like before, and the extended
VA is sequential if all accepting runs are valid, in which case its document spanner is
defined like before.

Our definition of extended VAs is slightly different from~\cite{FlorenzanoRUVV18} because we
allow ev-transitions that read the empty set to change the automaton state. This allows us to make a small additional assumption to simplify our proofs:
we require that the states of extended VAs are partitioned between
\emph{ev-states}, from which only ev-transitions originate
(i.e., the $q_i$ above), and
\emph{letter-states}, from which only letter transitions originate (i.e.,
the $q_i'$ above);
and we impose that the initial state is an ev-state
and the final states are all letter-states. Note that transitions reading the
empty set move from an ev-state to a letter-state, like all other ev-transitions.
Our requirement can be imposed in linear time on any
input extended VA by rewriting each state to one letter-state
and one ev-state, and re-wiring the transitions and changing the
initial/final status of states appropriately. This rewriting preserves
sequentiality and guarantees that any path in the rewritten extended VA must
alternate between letter transitions and ev-transitions.
Hence, we implicitly make this assumption
on all extended VAs from now on.

\begin{example}
  \label{exa:va}
  The top of Figure~\ref{fig:figure} represents a sequential extended VA
  $\calA_0$ to
  extract email addresses. To keep the example readable, we simply define them as words (delimited by a space
  or by the beginning or end of document)
  which contain one at-sign ``\texttt{\textsf{@}}'' preceded and followed by a non-empty sequence
  of non-``\texttt{\textsf{@}}'' characters. 
  In the drawing of~$\calA_0$, the initial state $q_0$ is at the left, and the states
  $q_{10}$ and $q_{12}$ are final. The transitions labeled by $\Sigma$ represent a
  set of transitions for each letter of $\Sigma$, and the same holds for
  $\Sigma'$ which we define as $\Sigma' \colonequals \Sigma \setminus
  \{\text{\texttt{\textsf{@}}}, \text{\textvisiblespace}\}$.
  
  It is easy to see that, on any input document $d$, there is one mapping
  of~$\calA_0$
  on~$d$ per email address contained in~$d$, which assigns the markers $\open x$
  and $\close x$ to the beginning and end of the email address, respectively. In
  particular, $\calA_0$ is sequential, because any accepting run is valid. Note that
  $\calA_0$ happens to have the property that each mapping is produced by exactly one
  accepting run, but our results in this paper do not rely on this property.
\end{example}

\subparagraph*{Matrix multiplication.}
The complexity bottleneck for some of our results will be the complexity of
multiplying two Boolean matrices, which is a long-standing open problem, see
e.g.~\cite{Gall12} for a recent discussion. When stating our results, we will
often denote by $2 \leq \omega \leq 3$ an exponent for Boolean matrix
multiplication: this is a constant such that the product of two $r$-by-$r$
Boolean matrices can be computed in time $O(r^\omega)$. For instance, we can
take $\omega \colonequals 3$ if we use the naive algorithm for Boolean matrix multiplication, and it is obvious that we must have $\omega \geq 2$. The best known upper bound is currently~$\omega < 2.3728639$, see~\cite{Gall14a}.

\section{Computing Mapping DAGs for Extended VAs}
\label{sec:extended}

\begin{figure}[t]
  \begin{tikzpicture}[->,>=latex,node distance=10.85mm,auto,
    every state/.style={minimum size=15pt,inner sep=1.5pt},
    map/.style={inner sep=0.5pt},
    trimmed/.style={dash pattern=on 1.5pt off 1.5pt}]
    \scriptsize

    %% Automaton states
    
    \node[state,initial above] (q0) {$q_0$};
    \node[state, right of=q0] (q1) {$q_1$};
    \node[state, right of=q1] (q2) {$q_2$};
    \node[state, right of=q2] (q3) {$q_3$};
    \node[state, right of=q3] (q4) {$q_4$};
    \node[state, right of=q4] (q5) {$q_5$};
    \node[state, right of=q5] (q6) {$q_6$};
    \node[state, right of=q6] (q7) {$q_7$};
    \node[state, right of=q7] (q8) {$q_8$};
    \node[state, right of=q8] (q9) {$q_{9}$};
    \node[state, accepting, right of=q9] (q10) {$q_{10}$};
    \node[state, right of=q10] (q11) {$q_{11}$};
    \node[state, accepting, right of=q11] (q12) {$q_{12}$};

    %% Automaton transitions
    \path
        (q0) edge node {$\emptyset$} (q1)
        (q1) edge node {$\Sigma$} (q2)
        (q2) edge[out=210,in=-30] node {$\emptyset$} (q1)
        (q1) edge[out=40,in=140] node {\textvisiblespace} (q3)
        (q0) edge[out=45,in=135] node {$\{\open\! x\}$} (q4)
        (q3) edge node[above=-1pt] {$\!\{\!\open\! x\!\}$} (q4)
        (q4) edge node {$\Sigma'$} (q5)
        (q5) edge[out=30,in=150] node {$\emptyset$} (q6)
        (q6) edge[out=210,in=-30] node {$\Sigma'$} (q5)
        (q6) edge node {\texttt{\textsf{@}}} (q7)
        (q7) edge node {$\emptyset$} (q8)
        (q8) edge[out=30,in=150] node {$\Sigma'$} (q9)
        (q9) edge[out=210,in=-30] node {$\emptyset$} (q8)
        (q9) edge node {$\{\!\close x\!\}$} (q10)
        (q10) edge node {\textvisiblespace} (q11)
        (q11) edge[out=30,in=150] node {$\emptyset$} (q12)
        (q12) edge[out=210,in=-30] node {$\Sigma$} (q11)        
        ;

    %% input document
    \node (d1) at (-.4,-1.8) {\texttt{a}};
    \node[below of=d1] (d2)  {\texttt{\textvisiblespace}};
    \node[below of=d2] (d3)  {\texttt{a}};
    \node[below of=d3] (d4)  {\texttt{\textsf{@}}};
    \node[below of=d4] (d5)  {\texttt{b}};
    \node[below of=d5] (d6)  {\texttt{\textvisiblespace}};
    \node[below of=d6] (d7)  {\texttt{b}};
    \node[below of=d7] (d8)  {\texttt{\textsf{@}}};
    \node[below of=d8] (d9)  {\texttt{c}};
    \node[below of=d9] (d10) {};

    \draw[loosely dotted,-] ($ (d1) + (0.3,0) $) -- ($ (d1) + (13.7,0) $);
    \draw[loosely dotted,-] ($ (d2) + (0.3,0) $) -- ($ (d2) + (13.7,0) $);
    \draw[loosely dotted,-] ($ (d3) + (0.3,0) $) -- ($ (d3) + (13.7,0) $);
    \draw[loosely dotted,-] ($ (d4) + (0.3,0) $) -- ($ (d4) + (13.7,0) $);
    \draw[loosely dotted,-] ($ (d5) + (0.3,0) $) -- ($ (d5) + (13.7,0) $);
    \draw[loosely dotted,-] ($ (d6) + (0.3,0) $) -- ($ (d6) + (13.7,0) $);
    \draw[loosely dotted,-] ($ (d7) + (0.3,0) $) -- ($ (d7) + (13.7,0) $);
    \draw[loosely dotted,-] ($ (d8) + (0.3,0) $) -- ($ (d8) + (13.7,0) $);
    \draw[loosely dotted,-] ($ (d9) + (0.3,0) $) -- ($ (d9) + (13.7,0) $);
    \draw[loosely dotted,-] ($ (d10) + (0.3,0) $) -- ($ (d10) + (13.7,0) $);
    
    %% mapping DAG nodes
    \newcommand{\map}[2]{$(q_{#1}$,${#2})$};

    \node[map] (m00) at (0,-1.25) {\map 0 0};
    \node[map,right of=m00] (m10) {\map 1 0};
    \node[map,right of=m10] (m20) {};
    \node[map,below of=m20] (m21) {\map 2 1};
    \node[map,below of=m21] (m22) {\map 2 2};
    \node[map,below of=m22] (m23) {\map 2 3};
    \node[map,below of=m23] (m24) {\map 2 4};
    \node[map,below of=m24] (m25) {\map 2 5};
    \node[map,below of=m25] (m26) {\map 2 6};
    \node[map,below of=m26] (m27) {\map 2 7};
    \node[map,below of=m27] (m28) {\map 2 8};
    \node[map,below of=m28] (m29) {\map 2 9};

    \node[map,right of=m20] (m30) {};
    \node[map,left of=m21] (m11) {\map 1 1};
    \node[map,left of=m22] (m12) {\map 1 2};
    \node[map,left of=m23] (m13) {\map 1 3};
    \node[map,left of=m24] (m14) {\map 1 4};
    \node[map,left of=m25] (m15) {\map 1 5};
    \node[map,left of=m26] (m16) {\map 1 6};
    \node[map,left of=m27] (m17) {\map 1 7};
    \node[map,left of=m28] (m18) {\map 1 8};
    \node[map,left of=m29] (m19) {\map 1 9};
    
    \node[map,right of=m30] (m40) {\map 4 0};
    \node[below of=m40] (m41) {};
    \node[map,right of=m41] (m51) {\map 5 1};
    \node[map,right of=m51] (m61) {\map 6 1};

    \node[map,right of=m22] (m32) {\map 3 2};
    \node[map,right of=m32] (m42) {\map 4 2};
    \node[below of=m42] (m43) {};
    \node[map,right of=m43] (m53) {\map 5 3};
    \node[map,right of=m53] (m63) {\map 6 3};
    \node[below of=m63] (m64) {};
    \node[map,right of=m64] (m74) {\map 7 4};
    \node[map,right of=m74] (m84) {\map 8 4};
    \node[map,below of=m84] (m85) {\map 8 5};
    \node[map,right of=m85] (m95) {\map 9 5};
    \node[map,right of=m95] (m105) {\map {1\!0} 5};
    \node[below of=m105] (m106) {};
    \node[map,right of=m106,node distance=9.5mm] (m116) {\map {1\!1} 6};
    \node[map,below of=m116] (m117) {\map {1\!1} 7};
    \node[map,below of=m117] (m118) {\map {1\!1} 8};
    \node[map,below of=m118] (m119) {\map {1\!1} 9};

    \node[map,right of=m116, node distance=11.35mm] (m126) {\map {1\!2} 6};
    \node[map,below of=m126] (m127) {\map {1\!2} 7};
    \node[map,below of=m127] (m128) {\map {1\!2} 8};
    \node[map,below of=m128] (m129) {\map {1\!2} 9};

    \node[map,right of=m26] (m36) {\map 3 6};
    \node[map,right of=m36] (m46) {\map 4 6};
    \node[below of=m46] (m47) {};
    \node[map,right of=m47] (m57) {\map 5 7};
    \node[map,right of=m57] (m67) {\map 6 7};
    \node[below of=m67] (m68) {};
    \node[map,right of=m68] (m78) {\map 7 8};
    \node[map,right of=m78] (m88) {\map 8 8};
    \node[map,below of=m88] (m89) {\map 8 9};
    \node[map,right of=m89] (m99) {\map 9 9};
    \node[map,right of=m99] (m109) {\map {1\!0} 9};

    \node[map,below of=m119] (mf) {$v_{\f}$};

    %% mapping DAG edges
      
    \path (m00) edge node[below,near start] {$\emptyset$} (m10);
    \path (m10) edge node {$\epsilon$} (m21);
    \path (m21) edge node[above,near start] {$\emptyset$} (m11);
    \path (m22) edge node[above,near start] {$\emptyset$} (m12);
    \path (m23) edge node[above,near start] {$\emptyset$} (m13);
    \path (m24) edge node[above,near start] {$\emptyset$} (m14);
    \path (m25) edge node[above,near start] {$\emptyset$} (m15);
    \path (m26) edge[trimmed] node[above,near start] {$\emptyset$} (m16);
    \path (m27) edge[trimmed] node[above,near start] {$\emptyset$} (m17);
    \path (m28) edge[trimmed] node[above,near start] {$\emptyset$} (m18);
    \path (m29) edge[trimmed] node[above,near start] {$\emptyset$} (m19);

    \path (m00) edge[trimmed,out=15,in=165] node[pos=0.58] {$\{(\open x\!,\!0)\}$} (m40);
    \path (m40) edge[trimmed] node {$\epsilon$} (m51);
    \path (m51) edge[trimmed,near start] node {$\emptyset$} (m61);

    \path (m11) edge node[below left,pos=0.3] {$\epsilon$} (m22);
    \path (m12) edge node[above right] {$\epsilon$} (m23);
    \path (m13) edge node[above right] {$\epsilon$} (m24);
    \path (m14) edge node[above right] {$\epsilon$} (m25);
    \path (m15) edge[trimmed] node[below left,pos=0.3] {$\epsilon$} (m26);
    \path (m16) edge[trimmed] node[above right] {$\epsilon$} (m27);
    \path (m17) edge[trimmed] node[above right] {$\epsilon$} (m28);
    \path (m18) edge[trimmed] node[above right] {$\epsilon$} (m29);

    \path (m11) edge node {$\epsilon$} (m32);
    \path (m32) edge node[above=1mm] {$\{(\open x$,$2)\}$} (m42);
    \path (m42) edge node {$\epsilon$} (m53);
    \path (m53) edge node[near start] {$\emptyset$} (m63);
    \path (m63) edge node {$\epsilon$} (m74);
    \path (m74) edge node[near start] {$\emptyset$} (m84);
    \path (m84) edge node {$\epsilon$} (m95);
    \path (m95) edge[trimmed] node[above,near start] {$\emptyset$} (m85);
    \path (m95) edge node[above=1mm] {$\{(\close x$,$5)\}$} (m105);
    \path (m105) edge node {$\epsilon$} (m116);
    \path (m116) edge node[near start] {$\emptyset$} (m126);
    \path (m126) edge node[above left] {$\epsilon$} (m117);
    \path (m117) edge node[near start] {$\emptyset$} (m127);
    \path (m127) edge node[above left] {$\epsilon$} (m118);
    \path (m118) edge node[near start] {$\emptyset$} (m128);
    \path (m128) edge node[above left] {$\epsilon$} (m119);
    \path (m119) edge node[near start] {$\emptyset$} (m129);
    \path (m129) edge node[above left] {$\epsilon$} (mf);
    
    \path (m15) edge node {$\epsilon$} (m36);
    \path (m36) edge node[above=1mm] {$\{(\open x\!,\!6)\}$} (m46);
    \path (m46) edge node {$\epsilon$} (m57);
    \path (m57) edge node[near start] {$\emptyset$} (m67);
    \path (m67) edge node {$\epsilon$} (m78);
    \path (m78) edge node[near start] {$\emptyset$} (m88);
    \path (m88) edge node {$\epsilon$} (m99);
    \path (m99) edge[trimmed] node[above,near start] {$\emptyset$} (m89);
    \path (m99) edge node[above=1mm] {$\{(\close x$,$9)\}$} (m109);
    \path (m109) edge node {$\epsilon$} (mf);
  \end{tikzpicture}
  \caption{Example sequential extended VA $\calA_0$ to extract e-mail addresses (see
  Example~\ref{exa:va}) and example mapping DAG on an example document (see
  Examples~\ref{exa:mapping}, \ref{exa:capture}, \ref{exa:trim}, and \ref{exa:level}).}
  \label{fig:figure}
\end{figure}
We start our paper by studying extended VAs, which are easier to work with
because the set of markers that can be assigned at every position is explicitly
written as the label of a single transition. We accordingly show Theorem~\ref{thm:main} for the case of extended VAs
in Sections~\ref{sec:extended}--\ref{sec:jump}. We will then cover the case of
non-extended VAs in Section~\ref{sec:flashlight}.

To show Theorem~\ref{thm:main} for extended VAs, we will reduce the problem of enumerating the mappings captured
by $\calA$ to that of enumerating path labels in a special kind of directed
acyclic graph (DAG), called a \emph{mapping DAG}. This DAG is intuitively a variant of the product of~$\calA$ and of the document~$d$,
where we represent simultaneously the position in the document and the
corresponding state of~$\calA$. We will no longer care in the mapping DAG about
the labels of letter transitions, so we will erase these labels and call these
transitions \emph{$\epsilon$-transitions}. As for the ev-transitions, we will
extend their labels to indicate the position in the document in addition to the variable markers.
We first give the general definition of a mapping DAG:

\begin{definition}
  A \emph{mapping DAG} consists of a set $V$ of \emph{vertices}, an
  \emph{initial vertex} $v_0 \in V$, a \emph{final vertex} $v_{\f} \in V$, and a set of \emph{edges} $E$ where each edge
  $(s, x, t)$ has a \emph{source vertex} $s \in V$, a \emph{target vertex} $t
  \in V$, and a \emph{label} $x$ that may be $\epsilon$ (in which case we call the edge an
  \emph{$\epsilon$-edge}) or a finite (possibly empty) set of pairs $(m,i)$, where $m$ is a variable marker and $i$ is a position. These edges are called \emph{marker edges}.
  We require that the graph $(V, E)$ is acyclic.
  We say that a mapping DAG is
  \emph{normalized} if every path from the initial vertex to the final vertex
  starts with a marker edge, ends with an $\epsilon$-edge, and
  alternates between marker edges and $\epsilon$-edges.
  
  The \emph{mapping} $\mu(\pi)$ of a path $\pi$ in the mapping DAG is the union of 
  labels of the marker edges of $\pi$: we require of any mapping DAG that, for every path
  $\pi$, this union is
  disjoint. Given a set $U$ of vertices of~$G$, we write
  $\calM(U)$ for the set of mappings of paths from a vertex of~$U$ to the final
  vertex; note that the same mapping may be captured by multiple different
  paths.
  The set of mappings \emph{captured} by~$G$ is then
  $\calM(G) \colonequals \calM(\{v_0\})$.
\end{definition}

Intuitively, the $\epsilon$-edges will correspond to letter transitions
of~$\calA$ (with the letter being erased, i.e., replaced by~$\epsilon$),
and marker edges will correspond to ev-transitions: their labels
are a possibly empty finite set of pairs of a variable marker and position,
describing which variables have been assigned during the
transition. We now explain how we construct a DAG from~$\calA$ and from
a document~$d$, which we call the \emph{product DAG} of~$\calA$ and~$d$, and
which we will show to be a mapping DAG:

\begin{definition}
  Let $\calA = (Q, q_0, F, \delta)$ be a sequential extended VA and
  let $d = d_0 \cdots d_{n-1}$ be an input document.
  The \emph{product DAG} of $\calA$ and $d$ is the DAG
  whose vertex set is $Q \times \{0, \ldots, n\} \cup \{v_{\f}\}$ with
  $v_{\f} \colonequals (\bullet, n+1)$ for some fresh value~$\bullet$.
  Its edges are:
  \begin{itemize}
    \item For every letter-transition $(q, a, q')$ in~$\delta$, for every $0
      \leq i < \card{d}$ such that $d_i = a$, there is an $\epsilon$-edge from
      $(q, i)$ to $(q', i+1)$;
    \item For every ev-transition $(q, M, q')$ in~$\delta$, for
      every $0 \leq i \leq \card{d}$, there is a marker edge from $(q, i)$ to~$(q', i)$ labeled
      with the (possibly empty)
      set $\{(m,i) \mid m \in M\}$.
    \item For every final state $q \in F$, an $\epsilon$-edge from $(q, n)$
      to~$v_{\f}$.
  \end{itemize}
  The initial vertex of the product DAG is $(q_0, 0)$ and the final
  vertex is $v_{\f}$.
\end{definition}

Note that, contrary to~\cite{FlorenzanoRUVV18}, we
do not contract the $\epsilon$-edges but keep them throughout our algorithm.

\begin{example}
  \label{exa:mapping}
  The product DAG of our example sequential extended VA $\calA_0$ and of the
  example document
  $\text{\texttt{a\textvisiblespace{}a\textsf{@}b\textvisiblespace{}b\textsf{@}c}}$ 
  is shown on Figure~\ref{fig:figure}, with the document being written at the left from
  top to bottom. The initial vertex of the DAG is
  $(q_0, 0)$ at the top left and its final vertex is $v_{\f}$ at the bottom. We draw marker edges horizontally, and
  $\epsilon$-edges diagonally. To simplify the example, we only draw the parts
  of the DAG that are reachable from the initial vertex. Edges are
  dashed when they cannot be used to reach the final vertex.
\end{example}

It is easy to see that this construction satisfies the definition:

\begin{claimrep}
  \label{clm:dagwellformed}
  The product DAG of~$\calA$ and~$d$ is a normalized mapping DAG.
\end{claimrep}

\begin{proofsketch}
  The mapping DAG is acyclic and normalized because its edges follow the
  transitions of the extended VA, which we had preprocessed to distinguish
  letter-states and ev-states. Paths in the mapping DAG cannot contain multiple
  occurrences of the same label, because the labels in the mapping DAG include the position in the
  document.
\end{proofsketch}

\begin{proof}
It is immediate that the product DAG is indeed acyclic, because the second
component is
always nondecreasing, and an edge where the second component does not
increase (corresponding to an ev-transition of the VA) must be followed by an edge where
it does (corresponding to a letter-transition of the VA). What is more, we claim that
no path in the product DAG can include two edges whose labels contain the same
pair $(m, i)$, so that the unions used to define the mappings of the mapping DAG
are indeed disjoint. To see this, consider a path from an edge $((q_1, i_1), M_1, (q'_1, i_1))$ to an edge
$((q_2, i_2), M_2, (q'_2, i_2))$ where $M_1 \neq \epsilon$ and $M_2 \neq
\epsilon$, we have 
$i_1 < i_2$ and~$M_1$
and~$M_2$ are disjoint because all elements of~$M_1$ have~$i_1$ as their first
component, and all elements
of~$M_2$ have~$i_2$ as their first component.
Further, the product DAG is also normalized because
$\calA$ is an extended VA that we have preprocessed to distinguish
letter-states and ev-states.
\end{proof}

Further, the product DAG clearly captures what
we want to enumerate. Formally:

\begin{claimrep}
  \label{clm:dagcorrect}
  The set of mappings of $\calA$ on $d$ is exactly the set of
  mappings $\calM(G)$ captured by the product DAG~$G$.
\end{claimrep}

\begin{proof}
  This is immediate as there is a clear bijection between accepting runs
  of~$\calA$ on~$d$ and paths from the initial vertex of~$G$ to its final
  vertex, and this bijection ensures that the label of the path in~$G$ is the
  mapping corresponding to that accepting run.
\end{proof}

\begin{example}
  \label{exa:capture}
  The set of mappings captured by the example product DAG on
  Figure~\ref{fig:figure} is $\{\{(\open x, 2), (\close x, 5)\}, \{(\open x, 6),
  (\close x, 9)\}\}$, and this is indeed the 
  set of mappings of the example extended VA~$\calA_0$ on the example document.
\end{example}

Our task is to enumerate $\calM(G)$ \emph{without
duplicates}, and this is still non-obvious:
because of nondeterminism, the same mapping in the
product DAG may be witnessed by exponentially many paths, corresponding to
exponentially many runs of the nondeterministic extended VA~$\calA$. 
We will present in the next section our algorithm to perform this task on the
product DAG~$G$. To do this, we
will need to preprocess $G$ by \emph{trimming} it, and 
introduce the notion of \emph{levels} to reason about its
structure.

First, we present how to \emph{trim}~$G$. We say that $G$ is \emph{trimmed}
if every vertex~$v$ is both \emph{accessible} (there
is a path from the initial vertex to~$v$) and \emph{co-accessible} (there is a
path from~$v$ to the final vertex). Given a mapping DAG, we can clearly
trim in linear time by two linear-time graph traversals.
Hence, we will always implicitly assume
that the mapping DAG is trimmed. 
If the mapping DAG may be empty once
trimmed, then there are no mappings to enumerate, so our task is trivial.
Hence, we assume in the sequel that the mapping DAG is non-empty after
trimming. Further, if $\calV = \emptyset$ then the only possible mapping is the
empty mapping and we can produce it at that stage, so in the sequel we assume
that~$\calV$ is non-empty.

\begin{example}
  \label{exa:trim}
  For the mapping DAG of Figure~\ref{fig:figure}, trimming eliminates the
  non-accessible vertices (which are not
  depicted) and the non-co-accessible vertices (i.e., those with incoming dashed edges).
\end{example}

Second, we present an invariant on the structure
of~$G$ by introducing
the notion of \emph{levels}:

\begin{definition}
  A mapping DAG $G$ is \emph{leveled} if its vertices $v = (q,
  i)$ are pairs whose second component~$i$ is a nonnegative integer called the
  \emph{level} of the vertex and written $\lvl(v)$,
  and where the following conditions hold:
  \begin{itemize}
    \item For the initial vertex $v_0$ (which has no incoming edges),
      the level is~$0$;
    \item For every $\epsilon$-edge from~$u$ to~$v$, we have $\lvl(v) = \lvl(u)+1$;
    \item For every  marker edge from~$u$ to~$v$, we have $\lvl(v) = \lvl(u)$. Furthermore, all pairs $(m,i)$ in the label of the edge have $i=\lvl(v)$.
  \end{itemize}
  The \emph{depth} $\Depth$ of~$G$ is the maximal level.
  The \emph{width} $\Width$ of~$G$ is the maximal number of vertices that have
  the same level.
\end{definition}

The following is then immediate by construction:

\begin{claimrep}
  \label{clm:leveled}
  The product DAG of~$\calA$ and~$d$ is leveled, and we have $\Width \leq
  \card{Q}$ and $\Depth = \card{d} + 1$.
\end{claimrep}

\begin{proof}
  It is clear by construction that the product DAG satisfies the first three
  points in the definition of a leveled mapping DAG. To see why the last
  point holds, observe that for every edge of the product DAG, for every pair
  $(m,i)$ that occurs in the label of that edge, the second component~$i$ of the
  pair indicates how many letters of~$d$ have been read so far, so the source
  vertex must have level~$i$.

  To see why the width and depth bounds hold, observe that each
  level of the product DAG corresponds to a copy of~$\calA$, so it has at
  most~$\card{Q}$ vertices; and that the number of levels corresponds to the number
  of letters of the document, plus one level for the final vertex.
\end{proof}

\begin{example}
  \label{exa:level}
  The example mapping DAG on Figure~\ref{fig:figure} is leveled, and
  the levels are represented as horizontal layers separated by dotted
  lines: the topmost level is level~0 and the bottommost level is level~10.
\end{example}

In addition to levels, we will need the notion of a \emph{level set}:
  
\begin{definition}
  \label{def:levelset}
  A \emph{level set} $\Lambda$ is a non-empty set of vertices in a leveled
  normalized
  mapping DAG that all have the same level (written $\lvl(\Lambda)$) and which
  are all the source of some marker edge. The singleton $\{v_{\f}\}$ of
  the final vertex is also considered as a level set.
\end{definition}

In particular, letting $v_0$ be the initial vertex, the singleton $\{v_0\}$ is a
level set. Further, if we consider a level set $\Lambda$ which is not the final
vertex, then we can follow
marker edges from all vertices of~$\Lambda$ (and only such edges) to get
to other vertices, and follow $\epsilon$-edges from these vertices (and
only such edges) to get to a new level set~$\Lambda'$ with $\lvl(\Lambda') =
\lvl(\Lambda)+1$.

\section{Enumeration for Mapping DAGs}
\label{sec:enum}

In the previous section, we have reduced our enumeration problem for extended
VAs on documents to an enumeration problem on normalized leveled mapping
DAGs. In this
section, we describe our main enumeration algorithm on such
DAGs and show the following:

\begin{theorem}
  \label{thm:dag}
  Let $2 \leq \omega \leq 3$ be an exponent for Boolean matrix multiplication.
  Given a normalized leveled mapping DAG $G$ of depth $\Depth$ and width
  $\Width$,
  we can enumerate $\calM(G)$ (without duplicates) 
  with preprocessing $O(\card{G} + \Depth \times \Width^{\omega+1})$ and delay
  $O(\Width^{2} \times (r+1))$
  where $r$ is the size of each produced mapping.
\end{theorem}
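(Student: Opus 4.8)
The plan is to enumerate $\calM(G)$ by a recursive traversal that, starting from the level set $\{v_0\}$, repeatedly chooses a nonempty marker to output and recurses, but with enough structure to guarantee (a) no duplicate mapping is ever produced, and (b) constant delay in the document size. The central data structure is the level set: from a level set $\Lambda$, I follow all marker edges, then all $\epsilon$-edges, to reach the next level set $\Lambda'$ at $\lvl(\Lambda)+1$, as observed after Definition~\ref{def:levelset}. Each step across two consecutive levels offers a choice of marker-edge labels (subsets of pairs $(m,i)$); concatenating these choices along a root-to-$v_{\f}$ path yields exactly one mapping, and the disjointness guarantee from the mapping DAG definition ensures these labels combine into a well-defined mapping. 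The enumeration tree thus branches on \emph{which distinct marker label} to commit to at the current level.

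\textbf{Avoiding duplicates.} The key obstacle is that the same mapping is witnessed by exponentially many paths, so I cannot simply do a DFS over paths. The plan is to branch not on edges but on \emph{distinct reachable marker labels}, and to collapse all vertices reachable by a given partial choice into a single level set that I carry forward. Concretely, at a level set $\Lambda$, I partition the outgoing marker edges by their label; for each label $M$ actually appearing, I compute the level set $\Lambda_M$ of targets reachable via an $M$-labeled marker edge followed by an $\epsilon$-edge, and recurse on $(\Lambda_M, \mu \cup M)$. Because I group by label, each distinct mapping prefix is explored exactly once, so the overall enumeration is duplicate-free. To make each recursive call \emph{productive} (i.e.\ to guarantee it leads to at least one output and hence bound the delay), I will need to ensure that every $\Lambda_M$ I recurse into is co-accessible — which is exactly what trimming buys us — and more subtly, that I do not waste time descending through long runs of levels where the only available label is the empty set $\emptyset$.

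\textbf{The jump function and constant delay.} This last point is where the $O(\Width^2 \times (r+1))$ delay bound is won, and it is the hardest part of the argument. Between two nonempty markers of a mapping there may be arbitrarily many document positions (levels) at which only $\emptyset$-labeled marker edges are available; naively walking through them costs $\Theta(\Depth)$ time, i.e.\ linear in the document, destroying constant delay. The plan is to precompute during preprocessing a \emph{jump function} that, given a level set $\Lambda$, directly returns the next level set (at some deeper level) from which a \emph{nonempty} marker label is reachable, skipping all intervening $\emptyset$-only levels in $O(\Width^2)$ time per jump rather than time proportional to the number of levels skipped. The detailed construction and correctness of this jump function is deferred to Section~\ref{sec:jump}; here I assume it as a black box with the stated access cost, which relies on the $O(\Depth \times \Width^{\omega+1})$ preprocessing (the $\omega$ comes from Boolean matrix products used to compose reachability across blocks of levels).

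\textbf{Putting it together and analyzing the delay.} With the jump function in hand, the enumeration procedure is: maintain a current level set $\Lambda$ and current partial mapping $\mu$; jump to the next level set exposing a nonempty label; enumerate the distinct nonempty labels $M$ reachable there (there are at most $O(\Width^2)$ source–target pairs, hence at most that many distinct labels to consider, and for each I compute $\Lambda_M$ in $O(\Width^2)$); for each such $M$, recurse on $(\Lambda_M, \mu\cup M)$; and when the jump reaches $\{v_\f\}$, output $\mu$. Each produced mapping has size $r$, so the recursion has depth $O(r)$, and at each of the $O(r)$ levels of recursion the work between two consecutive outputs is $O(\Width^2)$ for computing and grouping the reachable labels and the successor level sets; summing gives delay $O(\Width^2 \times (r+1))$. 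I expect the main technical care to lie in proving that the branching is genuinely duplicate-free despite the jump function merging vertices, and in verifying that every branch taken after a jump is co-accessible so that the delay bound holds uniformly — both of which follow from trimming together with the property, established in Section~\ref{sec:jump}, that the jump function preserves exactly the set of reachable nonempty-label mappings.
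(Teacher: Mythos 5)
You follow the paper's approach almost exactly (branch on distinct marker-edge labels to avoid duplicates, collapse reachable vertices into level sets, and precompute a jump function via Boolean matrix products), but your final algorithm has a genuine completeness gap: after a jump, you enumerate and recurse only on the \emph{nonempty} labels available at the jump target level. The jump function --- as both you and the paper define it --- only skips levels where \emph{every} reachable vertex has only $\emptyset$-labeled or $\epsilon$-labeled outgoing edges; at the target level itself, $\emptyset$-edges may coexist with nonempty ones, and a mapping that assigns no marker at that level but assigns markers deeper in the DAG is reachable only through the $\emptyset$ branch. Concretely, in the trimmed mapping DAG of Figure~\ref{fig:figure}, the jump from $\{(q_0,0)\}$ lands on the level set $\{(q_2,2),(q_3,2)\}$, where the nonempty label $\{(\open x,3)\}$ out of $(q_3,2)$ coexists with an $\emptyset$-edge out of $(q_2,2)$; the mapping $\{(\open x,6),(\close x,9)\}$ of Example~\ref{exa:capture} is reachable only via that $\emptyset$-edge, so your procedure would output only the first mapping and miss the second. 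Your earlier ``avoiding duplicates'' paragraph does branch on \emph{all} labels including $\emptyset$, but then the constant-delay argument does not apply to it; this inconsistency is exactly the crux of the theorem and must be resolved.

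The paper resolves it by keeping $\emptyset$ in the loop (this is \nextlevel, Proposition~\ref{prp:extendedVA}) but forcing it to come \emph{last}, via adjacency lists sorted during preprocessing. Correctness is then your argument (Claims~\ref{clm:decompose} and~\ref{clm:jump}), but the delay and memory bounds (Claim~\ref{claim:delay}) need two observations absent from your proposal: (i) after an $\emptyset$ step, the recursive call's jump strictly advances and its \emph{first} branch is nonempty, so unproductive steps never chain and the delay stays $O(\Width^2 \times (r+1))$; and (ii) since $\emptyset$ is last, its recursive call is a tail call and can be eliminated from the stack, which is what keeps the recursion depth (and the cost of unwinding the stack between consecutive outputs) bounded by $r+1$ rather than by $\Depth$. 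A further minor point: your bound of ``at most $\Width^2$ distinct labels per level'' need not hold in an abstract mapping DAG, since parallel marker edges between the same vertices may carry distinct labels; the sorted-list merge of Proposition~\ref{prp:extendedVA} is what achieves delay $O(\Width^2 \times \card{\lvar})$ per emitted label independently of how many distinct labels exist.
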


Remember that, as part of our preprocessing, we have ensured that the
leveled normalized mapping DAG~$G$ has been trimmed. We will
also preprocess~$G$ to ensure that, given any vertex, we can access its adjacency
list (i.e., the list of
its outgoing edges) in some sorted order on the labels, where we assume that
$\emptyset$-edges come last. 
This sorting can be
done in linear time on the RAM model \cite[Theorem~3.1]{grandjean1996sorting},
so the preprocessing is in~$O(\card{G})$.

Our general enumeration algorithm is then presented as
Algorithm~\ref{alg:main}. We explain the missing pieces next. The function
\textsc{Enum} is initially called with $\Lambda = \{v_0\}$, the level set containing
only the initial vertex, and with $\var$ being the empty set.

\begin{algorithm}
  \caption{Main enumeration algorithm}\label{alg:main}
  \begin{algorithmic}[1]
    \Procedure{enum}{$\Lambda, \var$}
      \State $\Lambda' \colonequals \,$\Call{Jump}{$\Lambda$}
      \If{$\Lambda'$ is the singleton $\{v_{\f}\}$ of the final vertex}\label{line:final}
        \State \Call{Output}{$\var$}
      \Else
        \For{$(\lvar, \Lambda'')$ in \Call{\nextlevel}{$\Lambda'$}}
          \State \Call{enum}{$\Lambda'', \lvar \cup \var$}
        \EndFor
      \EndIf
    \EndProcedure
  \end{algorithmic}
\end{algorithm}

For simplicity, let us assume for now that the \textsc{Jump} function just
computes the identity, i.e., $\Lambda'\colonequals\Lambda$. As for the call
$\nextlevel(\Lambda')$, it returns the pairs $(\lvar, \Lambda'')$ where:
\begin{itemize}
  \item The label set $\lvar$ is an edge label
    such that there is a marker edge labeled with $\lvar$ that starts at some vertex
    of~$\Lambda'$
  \item The level set~$\Lambda''$ is formed
    of all the vertices $w$ at level $\lvl(\Lambda')+1$ that can be
reached from such an edge followed by an $\epsilon$-edge. Formally, a vertex~$w$
    is in~$\Lambda''$ if and only if 
    there is an edge labeled $\lvar$ from some
    vertex $v \in \Lambda'$ to some vertex~$v'$, and there is an $\epsilon$-edge
    from~$v'$ to~$w$.
  \end{itemize}
Remember that, as the mapping DAG is normalized, we know that all edges starting at
vertices of the level set~$\Lambda'$ are marker edges (several of which
may have the same label); and for any target~$v'$ of these edges, all edges that
leave~$v'$ are $\epsilon$-edges whose targets~$w$ are at the level
$\lvl(\Lambda')+1$.

It is easy to see that the $\nextlevel$ function can be computed efficiently:

\begin{proposition}\label{prp:extendedVA}
  Given a leveled trimmed normalized mapping DAG $G$ with width $\Width$, and a 
  level set~$\Lambda'$,
  we can enumerate without duplicates
  all the pairs $(\lvar, \Lambda'') \in \nextlevel(\Lambda')$  with delay
  $O(\Width^2 \times \card{\lvar})$ in an order such that $\lvar = \emptyset$ comes
  last if it is returned.
\end{proposition}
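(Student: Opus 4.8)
The plan is to produce the pairs of $\nextlevel(\Lambda')$ by a single \emph{sorted merge} of the adjacency lists of the vertices of~$\Lambda'$, emitting one pair per distinct label~$\lvar$. Recall that, because $G$ is normalized and $\Lambda'$ is a level set, every edge leaving a vertex of~$\Lambda'$ is a marker edge, and by the preprocessing each adjacency list is already sorted by label with the label~$\emptyset$ placed last. Since $\card{\Lambda'} \le \Width$, I would keep a cursor into each of these at most $\Width$ sorted lists and repeatedly extract the globally smallest label still available at a cursor. Comparing two labels costs $O(\card{\lvar})$ and there are at most $\Width$ cursors to inspect, so locating the next label~$\lvar$ costs $O(\Width \times \card{\lvar})$. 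Crucially, processing the labels in this order and fully consuming a label before moving on guarantees that each distinct label is emitted \emph{exactly once}, which is what yields enumeration without duplicates, and the convention on~$\emptyset$ guarantees that $\emptyset$ is emitted last.

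Once the current label~$\lvar$ is fixed, I would advance every cursor past all its marker edges carrying this label, collecting their target vertices into a set~$S$. All these targets lie at level $\lvl(\Lambda')$, of which there are at most~$\Width$, so $\card{S}\le\Width$; and as there are at most~$\Width$ sources and at most~$\Width$ targets, there are at most $\Width^2$ edges labeled~$\lvar$ to scan, each costing $O(\card{\lvar})$ to recognize, for a total of $O(\Width^2 \times \card{\lvar})$. From~$S$ I would then follow the outgoing $\epsilon$-edges (the only edges leaving these vertices, by normalization) to obtain~$\Lambda''$ as their targets at level $\lvl(\Lambda')+1$; this costs $O(\Width^2)$ since each of the $\le\Width$ vertices of~$S$ has at most~$\Width$ such edges. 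To keep both $S$ and~$\Lambda''$ duplicate-free in constant time per insertion, I would index the vertices of each level by a Boolean array of size~$\Width$ (set up together with the sorting preprocessing) and clear only the touched entries after each output, in time $O(\Width)$. Summing up, the work between two consecutive outputs is $O(\Width^2 \times \card{\lvar} + \Width^2)$, which is $O(\Width^2 \times \card{\lvar})$ for every nonempty label; the single $\emptyset$-output at the very end costs $O(\Width^2)$.

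The correctness argument is short: grouping \emph{all} label-$\lvar$ edges leaving~$\Lambda'$ before following $\epsilon$-edges produces exactly the set~$\Lambda''$ demanded by the definition of~$\nextlevel$, and the sorted, one-label-at-a-time traversal accounts for the ``without duplicates'' and ordering requirements. I expect the only delicate part to be the bookkeeping of the merge rather than any conceptual difficulty: one must ensure (i) that the deduplication of~$S$ and~$\Lambda''$ truly runs in amortized constant time per vertex, which is why the level structure and per-level indexing are essential, and (ii) that no scan is wasted on a label that is never output --- this holds because every label found at a live cursor belongs to some marker edge of~$\Lambda'$ and is therefore emitted, so each head comparison can be charged to an actual output pair.
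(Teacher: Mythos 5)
Your proposal is correct and follows essentially the same approach as the paper's proof: a simultaneous sorted merge of the adjacency lists of the vertices of~$\Lambda'$, grouping all edges with the current smallest label~$\lvar$, then following the outgoing $\epsilon$-edges of the collected targets to form~$\Lambda''$, with the same $O(\Width^2 \times \card{\lvar})$ accounting and the same convention that $\emptyset$-labels sort last. The extra bookkeeping you describe (per-level Boolean arrays for deduplication, cursor management) is a reasonable implementation detail that the paper leaves implicit.
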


\begin{proof}
  We simultaneously go over the sorted lists of the outgoing edges of each
  vertex of~$\Lambda'$, of which there are at most~$\Width$, and we merge them.
  Specifically, as long as we are not done traversing all lists,
  we consider the smallest value of $\lvar$ (according to the order) that occurs at the
  current position of one of the lists. Then, we move forward in each list until
  the list is empty or the edge label at the current position is no longer equal 
  to~$\lvar$, and
  we consider the set~$\Lambda'_2$ of all vertices~$v'$ that are the targets of
  the edges that we have seen. This considers
  at most~$\Width^2$ edges and reaches at most $\Width$ vertices
  (which are at the same level as~$\Lambda'$), and the total time spent reading
  edge labels is in $O(\card{\lvar})$, so the process is
  in~$O(\Width^2 \times \card{\lvar})$ so far. Now, we consider the outgoing
  edges of all vertices~$v' \in \Lambda'_2$ (all are $\epsilon$-edges) and return the
  set~$\Lambda''$ of the vertices~$w$ to which they lead: this only adds 
  $O(\Width^2)$ to the running time because we consider at most $\Width$
  vertices~$v'$ with 
  at most~$\Width$ outgoing edges each. Last, $\lvar = \emptyset$ comes
  last because of our assumption on the order of adjacency lists.
\end{proof}

The design of Algorithm~\ref{alg:main} is justified by the fact that, for
any level set~$\Lambda'$, the set $\calM(\Lambda')$ can be partitioned based on the value
of~$\lvar$. Formally:

\begin{claimrep}
  \label{clm:decompose}
  For any level set $\Lambda$ of~$G$ which is not the final vertex, we have:
  \begin{equation}
    \calM(\Lambda) \quad=\bigcup_{(\lvar, \Lambda'') \in \nextlevel(\Lambda)}
    \{\lvar \cup \alpha \mid \alpha \in \calM( \Lambda'')\}\;. \label{eq:partitioncalL}
    \end{equation}
    Furthermore, this union is disjoint, non-empty, and none of its terms is
    empty.
\end{claimrep}
\begin{proof}
  The definition of a level set and of a normalized mapping DAG ensures that we can decompose any path $\pi$
  from~$\Lambda$
  to~$v_{\f}$ as a marker edge~$e$ from $\Lambda$ to some vertex
  $v'$, an $\epsilon$-edge from~$v'$ to some vertex~$w$, 
  and a path $\pi'$
  from~$w$ to~$v_{\f}$. Further, the set of such $w$ is clearly a level set. Hence,
  the left-hand side of 
  Equation~\eqref{eq:partitioncalL-apx} is included in the right-hand side.
  Conversely, given such $v$, $v'$, $w$, and $\pi'$, we can combine them into a path
  $\pi$, so the right-hand side is included in the left-hand side. This proves
  Equation~\eqref{eq:partitioncalL-apx}.

  The fact that the union is disjoint is because, by
  definition of a leveled mapping DAG, the labels of marker edges starting at
  vertices in~$\Lambda$ include the level as the second component of all pairs
  that they contain, so these pairs cannot occur at a different level, i.e., they cannot occur on
  the path~$\pi'$; so the mappings in $\calM(\Lambda)$ are indeed partitioned
  according to their intersection with the set of labels that occur on the
  level $\lvl(\Lambda)$.

  The fact that the union is non-empty is because $\Lambda$ is non-empty and its
  vertices must be co-accessible so they must have some outgoing
  marker edge, which implies that $\nextlevel(\Lambda)$ is non-empty.

  The fact that none of the terms of the union is empty is because, for each 
  $(\lvar,\Lambda'') \in \nextlevel(\Lambda)$, we know that $\Lambda''$
  is non-empty because the mapping DAG is trimmed so all vertices are
  co-accessible.
\end{proof}

Thanks to this claim, we could easily prove by induction
that Algorithm~\ref{alg:main} correctly enumerates
$\calM(G)$ when \textsc{Jump} is the identity function.  
However, this algorithm would not achieve the desired delay bounds: 
indeed, it may be the case that $\nextlevel(\Lambda')$ only contains
$\lvar = \emptyset$, and then the recursive call to \textsc{Enum} would not make progress
in constructing the mapping,
so the delay would not generally be linear in the size of the mapping.
To avoid this issue, we use the \textsc{Jump} function to directly ``jump'' to a place in the mapping DAG where we can read a
label different from~$\emptyset$. Let us first give the relevant definitions:

\begin{definition}
  \label{def:jump}
  Given a level set $\Lambda$ in a leveled mapping DAG $G$, the
  \emph{jump level} $\JL(\Lambda)$ of~$\Lambda$ is the first level
  $j \geq \lvl(\Lambda)$ containing a vertex $v'$ such that some
  $v \in \Lambda$ has a path to~$v'$ and such that $v'$ is either the
  final vertex or has an outgoing edge with a label which is
  $\neq \epsilon$ and $\neq \emptyset$.
  In particular we have $\JL(\Lambda) = \lvl(\Lambda)$ if some
  vertex in~$\Lambda$ already has an outgoing edge with such a label, or if
  $\Lambda$ is the singleton set containing only the final vertex.
  
  The \emph{jump set} of~$\Lambda$ is then
  $\JS(\Lambda) \colonequals \Lambda$ if
  $\JL(\Lambda) = \lvl(\Lambda)$, and otherwise $\JS(\Lambda)$ is 
  formed of all vertices at level~$\JL(\Lambda)$ to
  which some $v \in \Lambda$ have a directed path whose last edge is
  labeled~$\epsilon$. This ensures that
 $\JS(\Lambda)$ is always a level set.
\end{definition}

The definition of $\JS$ ensures that we can jump from $\Lambda$
to~$\JS(\Lambda)$ when
enumerating mappings, and it will not change the result because we only
jump over
$\epsilon$-edges and $\emptyset$-edges:

\begin{claimrep}
  \label{clm:jump}
  For any level set $\Lambda$ of~$G$, we have
  $\calM(\Lambda) = \calM(\JS(\Lambda))$.
\end{claimrep}

\begin{proof}
  As $\JS(\Lambda)$ contains all vertices from level $\JL(\Lambda)$ that can be reached from
  $\Lambda$,
  any path $\pi$ from a vertex $u \in \Lambda$ to the final vertex
  can be decomposed into a path $\pi_{uw}$ from $u$ to a vertex $w \in
  \JS(\Lambda)$  and a path $\pi_{wv}$ from~$w$ to~$v$. By definition of
  $\JS(\Lambda)$, 
  we know that all edges in~$\pi_{uw}$ are labeled with~$\epsilon$
  or~$\emptyset$, so
  $\mu(\pi) = \mu(\pi_{wv})$. Hence, we have $\calM(\Lambda)
  \subseteq \calM(\JS(\Lambda))$.
  
  Conversely, given a path $\pi_{wv}$ from a vertex $w \in \JS(\Lambda)$ to the
  final vertex, the definition of~$\JS(\Lambda)$ ensures that there is a vertex $u \in
  \Lambda$ and a
  path $\pi_{uw}$ from~$u$ to~$w$, which again consists only of
  $\epsilon$-edges or $\emptyset$-edges.
  Hence, letting $\pi$ be the concatenation of $\pi_{uw}$ and
  $\pi_{wv}$, we have $\mu(\pi_{wv}) = \mu(\pi)$ and $\pi$ is a path
  from $\Lambda$ to the final vertex. Thus, we have $\calM(\JS(\Lambda))
  \subseteq \calM(\Lambda)$, concluding the proof.
\end{proof}

Claims \ref{clm:decompose} and~\ref{clm:jump} imply that
Algorithm~\ref{alg:main} is correct with this implementation of~$\JS$:

\begin{propositionrep}
  \label{prp:correct}
  \textsc{Enum}$(\{v_0\}, \emptyset)$ correctly enumerates $\calM(G)$ (without
  duplicates).
\end{propositionrep}

\begin{proof}
 We show the stronger claim that for every level set $\Lambda$, and for
  every set $\var$ of labels, we have that
  \textsc{Enum}$(\Lambda, \var)$ enumerates (without duplicates) the set
  $\var \uplus \calM(\Lambda) \colonequals \{\var \cup \alpha \mid \alpha \in
  \calM(\Lambda)\}$.
  The base case is when $\Lambda$ is the final vertex, and 
  then $\calM(\Lambda) = \{\{\}\}$ and the algorithm correctly returns $\{\var\}$.

  For the induction case, let us consider a level set $\Lambda$ which is not the
  final vertex, and some set $\var$ of labels.
  We let $\Lambda' \colonequals \JS(\Lambda)$, and by Claim~\ref{clm:jump} we
  have that $\calM(\Lambda')=\calM(\Lambda)$. Now 
  we know by
  Claim~\ref{clm:decompose} that $\calM(\Lambda')$ can be written as in Equation~\eqref{eq:partitioncalL}
  and that the union is disjoint;
  the algorithm evaluates this union. So it suffices to show that, for each 
  $(\lvar, \Lambda'') \in \nextlevel(\Lambda')$, the corresponding iteration of the
  \textbf{for} loop enumerates (without duplicates) the set $(\var \cup
  \lvar) \uplus \calM(\Lambda'')$. 
  By induction hypothesis, the call \textsc{Enum}$(\JS(\Lambda'), \var
  \cup \lvar)$ enumerates (without duplicates) the set  $(\var \cup \lvar)
  \uplus
  \calM(\JS(\Lambda''))$.
  So this establishes that the algorithm is correct.
\end{proof}

What is more, Algorithm~\ref{alg:main} now achieves the desired
delay bounds, as we will show. Of course, this relies on the fact that
the $\JS$ function can be efficiently precomputed and
evaluated. We only state this fact for now, and prove it in the next section:

\begin{proposition}
  \label{prp:jump}
  Given a leveled mapping DAG $G$ with width $\Width$ and depth~$\Depth$, we can
  preprocess $G$ in time $O(\Depth \times \Width^{\omega+1})$ such that, given
  any level set $\Lambda$ of~$G$, we can compute the jump set $\JS(\Lambda)$
  of~$\Lambda$ in time
  $O(\Width^{2})$.
\end{proposition}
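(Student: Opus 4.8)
The plan is to reduce the computation of jump sets to a small number of Boolean matrix products per level. The first ingredient is, for every vertex $v$, its \emph{individual jump level} $\mathrm{jl}(v)$: the smallest level $j\geq\lvl(v)$ at which $v$ can reach a vertex that is either $v_{\f}$ or the source of a marker edge whose label is $\neq\epsilon$ and $\neq\emptyset$. The point is that $\JL(\Lambda)=\min_{v\in\Lambda}\mathrm{jl}(v)$, so once the $\mathrm{jl}(v)$ are known the jump level of a level set is obtained in $O(\Width)$ time. Moreover, as observed after Definition~\ref{def:jump}, any path witnessing $\JL(\Lambda)$ stays below the first reachable ``interesting'' vertex, so every edge it uses is an $\epsilon$-edge or an $\emptyset$-labeled marker edge; hence $\JS(\Lambda)$ is exactly the set of vertices at level $\JL(\Lambda)$ reachable from $\Lambda$ along such edges. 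I would compute the $\mathrm{jl}(v)$ bottom-up: $\mathrm{jl}(v)=\lvl(v)$ if $v$ is interesting, and otherwise $\mathrm{jl}(v)=\min_{u}\mathrm{jl}(u)$ over the vertices $u$ at level $\lvl(v)+1$ reachable from $v$ by an $\emptyset$-marker edge followed by an $\epsilon$-edge (a non-interesting $v$ has only $\emptyset$-marker edges, so this captures all relevant successors).

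For the reachability itself I would work with Boolean matrices of size $\Width\times\Width$. For each level $\ell$, let $C_\ell$ be the single-step matrix with $C_\ell[v][w]=1$ iff there is an $\emptyset$-marker edge out of $v$ followed by an $\epsilon$-edge into the level-$(\ell+1)$ vertex $w$; each $C_\ell$ costs $O(\Width^2)$ and also yields the $\mathrm{jl}$ recurrence above. The crucial data structure is, for each level $\ell$, the family of matrices $R_{\ell\to j}$ giving $\emptyset/\epsilon$-reachability from level $\ell$ to level $j$, where $j$ ranges only over the \emph{distinct values} $\{\mathrm{jl}(v): v \text{ at level }\ell\}$ (at most $\Width$ of them). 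I build these top-down from the last level: $R_{\ell\to\ell}$ is the identity, and for a target $j>\ell$ I set $R_{\ell\to j}=C_\ell\cdot R_{\ell+1\to j}$. The key lemma making this work is a \emph{propagation} property: if $j>\ell$ is a needed target at level $\ell$, i.e.\ $j=\mathrm{jl}(v)$ for some non-interesting level-$\ell$ vertex $v$, then the $\min$ defining $\mathrm{jl}(v)$ is attained at some successor $u$ at level $\ell+1$ with $\mathrm{jl}(u)=j$, so $j$ is already a target at level $\ell+1$ and $R_{\ell+1\to j}$ is available. Each matrix is thus produced by a single Boolean product in time $O(\Width^\omega)$, and since there are at most $\Width$ targets per level the preprocessing is $O(\Width^2+\Width\cdot\Width^\omega)=O(\Width^{\omega+1})$ per level, i.e.\ $O(\Depth\times\Width^{\omega+1})$ overall (storage $O(\Depth\times\Width^3)$, which is within budget as $\omega\geq 2$).

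To answer a query on a level set $\Lambda$ at level $\ell$, I would compute $j=\min_{v\in\Lambda}\mathrm{jl}(v)$ in $O(\Width)$; if $j=\ell$ then $\JS(\Lambda)=\Lambda$, and otherwise I return the Boolean vector-matrix product $\Lambda\cdot R_{\ell\to j}$, i.e.\ the union of the rows of $R_{\ell\to j}$ indexed by $\Lambda$, in time $O(\Width^2)$. Correctness follows from the observation above (all witnessing paths are $\emptyset/\epsilon$-paths, whose last edge is an $\epsilon$-edge, so the image is exactly $\JS(\Lambda)$) together with the fact that reachability composes, which is what $R_{\ell\to j}=C_\ell\cdot R_{\ell+1\to j}$ encodes. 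The main obstacle, and the reason a single precomputed matrix per level does not suffice, is that the jump target level is \emph{data-dependent}: different level sets starting at the same level land at different levels $j$, so the output lives in different vertex spaces. I resolve this by indexing the precomputed matrices by target level and keeping only the $\leq\Width$ relevant targets per level; the propagation lemma is precisely what guarantees that this family stays small and that every member can be obtained from the family one level up with a single matrix multiplication, yielding both the stated preprocessing and the constant-in-$\Depth$ query bound.
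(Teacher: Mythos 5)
Your proposal is correct and takes essentially the same route as the paper's proof: your individual jump levels $\mathrm{jl}(v)$ are the paper's per-vertex $\JL(v)$ computed by backward induction, your per-level family of target levels is the paper's set of reachable levels $\rlvl(\ell)$ (of size at most $\Width$), your propagation lemma is exactly the paper's observation that $j \in \rlvl(i)$ implies $j \in \rlvl(i+1)$, which licenses building each matrix by a single product $\Reach(i,j) = \Reach(i,i+1)\cdot\Reach(i+1,j)$ in decreasing order of $i$, and your query procedure (take the minimum of the jump levels, then union the rows of the precomputed matrix indexed by $\Lambda$) is identical, giving the same $O(\Depth \times \Width^{\omega+1})$ preprocessing and $O(\Width^2)$ query bounds. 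The only differences are cosmetic: you fuse each $\emptyset$-edge/$\epsilon$-edge pair into a single step between marker-source vertices, whereas the paper keeps all vertices and single edges.
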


We can now conclude the proof of Theorem~\ref{thm:dag} by showing 
that the preprocessing and delay bounds are as claimed.
For the
preprocessing, this is clear: we do the preprocessing in $O(\card{G})$ presented at the
beginning of the section (i.e., trimming, and computing the sorted adjacency lists), followed by that of Proposition~\ref{prp:jump}.
For the delay, we claim:

\begin{claimrep}\label{claim:delay}
  Algorithm~\ref{alg:main} has delay $O(\Width^{2} \times (r+1))$, where $r$ is
  the size of the mapping of each produced path.
In particular,
  the delay is independent of the size of~$G$.
\end{claimrep}

\begin{proofsketch}
  The time to call \textsc{Jump} is in $O(\Width^2)$ by
  Proposition~\ref{prp:jump}, and the time spent to move to the next iteration
  of the
  \textbf{for} loop with a label set $\lvar$
  is in time $O(\Width^2 \times \card{\lvar})$ using
  Proposition~\ref{prp:extendedVA}: now the operations in the loop body run in constant time if
  we represent $\var$ as a linked list so that we do not have to copy it when making the
  recursive call. As Proposition~\ref{prp:extendedVA}
  ensures that~$\emptyset$ comes last, when producing the first solution,
  we make at most~$r+1$ calls to produce a solution of size~$r$, and the time is
  in $O(\Width^{2} \times (r+1))$. We adapt this argument to show that 
  each successive solution is also produced within that bound: note that when
  we use $\emptyset$ in the \textbf{for} loop (which does not contribute to~$r$)
  then the next call to \textsc{Enum} either
  reaches the final vertex or uses a non-empty set which contributes to~$r$.
  What is more, as
  $\emptyset$ is considered last, the corresponding call to \textsc{Enum} is
  tail-recursive, so we can ensure that the size of the stack (and hence the
  time to unwind it) stays~$\leq r+1$.
\end{proofsketch}

\begin{proof}
  Let us first bound the delay to produce the first solution.
  When we enter the \textsc{Enum} function, we call the \textsc{Jump} function
  to produce $\Lambda'$ in time $O(\Width^2)$ by Proposition~\ref{prp:jump},
  and either $\Lambda'$ is the final vertex or some vertex
  in~$\Lambda'$ must have an outgoing edge with a label different
  from~$\emptyset$.
  Then we enumerate $\nextlevel(\Lambda')$
  with delay $O(\Width^2 \times \card{\lvar})$ for each~$\lvar$ using Proposition~\ref{prp:extendedVA}.
  Remember that Proposition~\ref{prp:extendedVA}
  ensures that the label $\emptyset$ comes last;
  so by definition of $\JS$ the first value of~$\lvar$
  that we consider is different from~$\emptyset$.
  At each round of the \textbf{for} loop,
  we recurse in constant time: in particular, we
  do not copy $\var$ when writing $\lvar \cup \var$, as we can represent the set
  simply as a
  linked list.
  Eventually, after $r+1$ calls, by definition of a leveled mapping DAG,
  $\Lambda$
  must be the final vertex, and then we output a mapping of size~$r$ in
  time $O(r)$: the
  delay is indeed in $O(\Width^{2} \times (r+1))$ because the sizes of the
  values of~$\lvar$ seen along the path sum up to~$r$, and the unions  of
  $\lvar$ and $\var$ are always disjoint by definition of a mapping DAG.

  Let us now bound the delay to produce the next solution.
  To do so, we will first observe that when enumerating a mapping of
  cardinality~$r$, then the size of the recursion stack is always~$\leq r+1$.
  This is because Proposition~\ref{prp:extendedVA} ensures that the value $\lvar
  = \emptyset$ is always considered last in the \textbf{for} loop on
  $\nextlevel(\Lambda')$. 
  Thanks to this, every call to
  $\textsc{Enum}$ where $\lvar=\emptyset$ is actually a tail
  recursion, and we can avoid putting another call frame on the
  call stack using tail recursion
  elimination. This ensures that each call frame on the stack (except
  possibly the last one) contributes to the
  size of the currently produced mapping, so that indeed when we reach the final
  vertex of~$G$ then the call stack is no greater than the size of the mapping
  that we produce.

  Now, let us use this fact to bound the delay between consecutive solutions.
  When we move from one solution to another, it means that some \textbf{for} loop
  has moved to the next iteration somewhere in the call stack. To identify this,
  we must unwind the stack: when we produce a mapping of size~$r$, we unwind the stack until
  we find the next \textbf{for} loop that can move forward. 
  By our observation on the size of the stack, the unwinding takes time $O(r)$
  with $r$ is the size of the previously
  produced mapping; so we simply account for this unwinding time as part of the
  computation of the previous mapping. Now, 
  to move to the next iteration of the \textbf{for} loop and do the
  computations inside the loop, we spend a delay $O(\Width^{2} \times
  \card{\lvar})$ by
  Proposition~\ref{prp:extendedVA}. Let $r'$ be the current size of 
  $\var$, including the current $\lvar$. The \textbf{for} loop iteration
  finishes with a recursive call to
  \textsc{Enum}, and we can re-apply our argument about the first solution above
  to argue that this call identifies a mapping of some size $r''$ in delay $O(\Width^2
  \times (r'' + 1))$. However, because the argument $\var$ to the recursive call had size~$r'$, the mapping which is enumerated actually has size $r' + r''$
  and it is produced in delay $O(\Width^2 \times(r'' + 1) + r')$. This means
  that the overall delay to produce the next solution is indeed in $O(\Width^2 \times (r + 1))$ where $r$
  is the size of the mapping that is produced, which concludes the proof.
\end{proof}

\subparagraph*{Memory usage.} We briefly discuss the \emph{memory usage} of the
enumeration phase, i.e., the maximal amount of working memory that we need to
keep throughout the enumeration phase, not counting the
precomputation phase. Indeed, in enumeration algorithms the memory usage
can generally grow to be
very large even if one adds only a constant amount of information at every step.
We will show that this does not happen here, and that the memory
usage throughout the enumeration remains polynomial in~$\calA$ and constant in
the input document size.

All our memory usage during enumeration is in the call stack,
and thanks to tail recursion elimination (see the
proof of Claim~\ref{claim:delay}) we know that the stack depth is at most $r+1$,
where $r$ is the size
of the produced mapping as in the statement of 
Theorem~\ref{thm:dag}.
The local space in each stack frame must store $\Lambda'$ and $\Lambda''$, which
have size $O(\Width)$, and the status of the enumeration of \nextlevel in
Proposition~\ref{prp:extendedVA}, i.e., for every vertex $v \in \Lambda'$, the
current position in its adjacency list: this also has total size $O(\Width)$, so
the total memory usage of these structures over the whole stack is in $O((r+1)
\times \Width)$.
Last, we must also store the variables~$\var$
and~$\lvar$, but their total size of the variables $\lvar$
across the stack is clearly~$r$, and the same holds of $\var$ because each occurrence is
stored as a linked list (with a pointer to the previous stack frame). Hence, the
total memory usage is $O((r+1) \times \Width)$,
i.e., $O((\card{\calV}+1) \times \card{Q})$ in terms of the extended VA.

\section{Jump Function}
\label{sec:jump}
The only missing piece in the enumeration scheme of Section~\ref{sec:enum} is
the proof of Proposition~\ref{prp:jump}. We first explain
the preprocessing for the \textsc{Jump} function, and then the
computation scheme.

\subparagraph*{Preprocessing scheme.}
Recall the definition
of the jump level $\JL(\Lambda)$ and jump set 
$\JS(\Lambda)$ of
a level set~$\Lambda$
(Definition~\ref{def:jump}).
We assume that we have precomputed in $O(\card{G})$ the mapping $\lvl$ associating each
vertex~$v$ to its level $\lvl(v)$, as well as, for each level~$i$, the list of
the vertices $v$ such that $\lvl(v) = i$.

The first part of the preprocessing is then to compute, for every
individual vertex $v$, the jump level  $\JL(v) \colonequals \JL(\{v\})$, i.e.,
the minimal level containing a vertex $v'$ such that $v'$ is reachable from~$v$
and $v'$ is
either the final vertex or has an outgoing edge which is neither an
$\epsilon$-edge nor
an $\emptyset$-edge. We claim:

\begin{claimrep}
  \label{clm:efficientJL}
  We can precompute in $O(\Depth \times \Width^2)$ the jump level $\JL(v)$
  of all vertices $v$ of~$G$.
\end{claimrep}

\begin{proofsketch}
  We do the computation along a topological order: 
we have $\JL(v_{\f}) \colonequals \lvl(v_{\f})$ for the final vertex~$v_{\f}$,
we have $\JL(v) \colonequals \lvl(v)$ if $v$ has an outgoing edge which is not
an $\epsilon$-edge or an $\emptyset$-edge,
and otherwise we have $\JL(v) \colonequals \min_{v \rightarrow w} \JL(w)$.
\end{proofsketch}

\begin{proof}
This construction can be performed iteratively from the final vertex $v_{\f}$
to the initial vertex $v_0$: 
we have $\JL(v_{\f}) \colonequals \lvl(v_{\f})$ for the final vertex~$v_{\f}$,
we have $\JL(v) \colonequals \lvl(v)$ if $v$ has an outgoing edge which is not
an $\epsilon$-edge or an $\emptyset$-edge,
and otherwise we have $\JL(v) \colonequals \min_{v \rightarrow w} \JL(w)$.

This computation can be performed along a reverse
topological order, which by~\cite[Section 22.4]{CormenLRS09} takes linear time
in~$G$. However, note that $G$ has at most $\Depth \times \Width$ vertices, and we only
traverse $\epsilon$-edges and $\emptyset$-edges: we just check the existence of
edges with other labels but we do not traverse them. Now, as each vertex has at
  most $\Width$ outgoing edges labeled~$\emptyset$ and at most~$\Width$ outgoing
  edges labeled $\epsilon$, the number of edges in the DAG that we actually
  traverse is only $O(\Depth \times \Width^2)$, which shows our complexity bound
  and concludes the proof.
\end{proof}

The second part of the preprocessing is to compute, for each level~$i$ of~$G$,
the \emph{reachable levels} $\rlvl(i)\colonequals\{\JL(v) \mid \lvl(v) = i\}$,
which we can clearly do in linear time in the number of vertices of~$G$, i.e.,
in $O(\Depth \times \Width)$. Note that the definition clearly ensures that we have
$\card{\rlvl(i)} \leq \Width$.

\begin{example}
  In Figure~\ref{fig:figure}, the jumping level for nodes
  $(q_1, 3)$ and $(q_2,3)$ is 6 and the jumping level for nodes $(q_5,3)$ and
  $(q_6, 3)$ is 5. Hence, the set of reachable levels
  $\rlvl(3)$ for level~3 is $\{5,6\}$.
\end{example}

Last, the third step of the preprocessing is to compute a reachability matrix from each level to its
reachable levels. Specifically, for any two levels $i < j$ of~$G$, let $\Reach(i,
j)$ be the Boolean matrix of size at most $\Width \times \Width$ which
describes, for each $(u, v)$ with $\lvl(u) = i$ and $\lvl(v) = j$, whether there
is a path from~$u$ to~$v$ whose last edge is labeled~$\epsilon$. We can't afford
to compute all these matrices, but we claim that we can efficiently compute
a subset of them, which will be enough for our purposes:

\begin{claimrep}
  \label{clm:efficientmatrix}
  We can precompute in time $O(\Depth \times \Width^{\omega+1})$ the matrices 
  $\Reach(i,j)$ for all pairs of levels $i < j$ such that $j \in \rlvl(i)$.
\end{claimrep}

\begin{proofsketch}
  We compute them in decreasing order on~$i$: the matrix 
  $\Reach(i, i+1)$ can be computed in time $O(\Width \times \Width)$ from the
  edge relation, and matrices $\Reach(i, j)$ with $j > i+1$ can be computed
  in time $O(\Width^\omega)$
  as the product of $\Reach(i, i+1)$ and $\Reach(i+1, j)$: note that $\Reach(i+1, j)$
  has been precomputed because $j \in \rlvl(i)$ easily
  implies that $j \in \rlvl(i+1)$.
\end{proofsketch}

\begin{proof}
  We compute the matrices in decreasing order on~$i$, then for each
fixed~$i$ in arbitrary order on~$j$:
\begin{itemize}
\item if $j=i$, then $\Reach(i,j)$ is the identity matrix;
\item if $j=i+1$, then $\Reach(i,j)$ can be computed from the edge
  relation of $G$ in time $O(\Width \times \Width)$, because it suffices to
    consider the edges labeled~$\emptyset$ and~$\epsilon$ between levels~$i$
    and~$j$;
\item if $j>i+1$, then $\Reach(i,j)$ is the product of $\Reach(i,i+1)$
  and $\Reach(i+1,j)$, which can be computed in time
  $O(\Width^\omega)$.
\end{itemize}  
In the last case, the crucial point is that $\Reach(i+1,j)$ has already been
precomputed, because we are computing $\Reach$ in decreasing order on~$i$, and
  because we must have $j \in \rlvl(i+1)$. Indeed,  if $j \in \rlvl(i)$, then
  there is a vertex $v$ with $\lvl(v) = i$ such that $\JL(v) = j$, and 
  the inductive definition of $\JL$ implies that $v$ has an edge to 
a vertex $w$ such that
  $\lvl(w)=i+1$ and $\JL(v)=\JL(w)=j$, which witnesses that $j \in \rlvl(i+1)$.

The total running time of this scheme is in 
$O(\Depth \times \Width^{\omega+1})$: indeed we consider each of the $\Depth$
  levels of~$G$, we compute at most
$\Width$ matrices for each level of $G$ because we have
$\card{\rlvl(i)} \leq W$ for any~$i$, and each matrix is computed in time at
  most~$O(\Width^{\omega})$.
\end{proof}

\subparagraph*{Evaluation scheme.}
We can now describe our evaluation scheme for the jump function. Given a level
set~$\Lambda$, we wish to compute $\JS(\Lambda)$. Let $i$ be the level
of~$\Lambda$, and let $j$ be $\JL(\Lambda)$ which we compute as $\min_{v \in \Lambda}\JL(v)$. 
If $j = i$, then $\JS(\Lambda) = \Lambda$ and there is nothing to do. Otherwise,
by definition there must be $v \in \Lambda$ such that $\JL(v) = j$, so $v$
witnesses that $j \in \rlvl(i)$, and we know that we have precomputed the matrix
$\Reach(i, j)$. Now $\JS(\Lambda)$ are the vertices at level~$j$ to
which the vertices of~$\Lambda$ (at level~$i$) have a directed path whose last
edge is labeled~$\epsilon$,
which we can simply compute in time~$O(\Width^2)$ by unioning the
lines that correspond to the vertices of~$\Lambda$ in the matrix $\Reach(i, j)$.

This concludes the proof of Proposition~\ref{prp:jump} and completes the
presentation of our scheme to enumerate the set captured by
mapping DAGs (Theorem~\ref{thm:dag}). Together with
Section~\ref{sec:extended}, this proves Theorem~\ref{thm:main} in the case of
extended sequential VAs.

\section{From Extended Sequential VAs to General Sequential VAs}
\label{sec:flashlight}

In this section, we adapt our main result (Theorem~\ref{thm:main}) to work with
sequential non-extended VAs rather than sequential extended VAs.
Remember that we cannot tractably
convert non-extended VAs into extended VAs
\cite[Proposition~4.2]{FlorenzanoRUVV18}, so we must modify 
our construction in Sections~\ref{sec:extended}--\ref{sec:jump} to work with
sequential non-extended VAs directly.
Our general approach will be the same:
compute the mapping DAG and trim it like in
Section~\ref{sec:extended}, then precompute the jump level and jump set
information as in Section~\ref{sec:jump}, and apply the enumeration scheme of
Section~\ref{sec:enum}. 
The difficulty is that non-extended VAs 
may assign multiple markers at the same word position by taking multiple
variable transitions instead of one single ev-transition.
Hence, when enumerating all possible values for $\lvar$ in
Algorithm~\ref{alg:main},
we need to consider all possible sequences of variable transitions. The
challenge is that there
may be many different transition sequences that assign the same set of markers,
which could lead to duplicates in the enumeration.
Thus, our goal will be to design a replacement to
Proposition~\ref{prp:extendedVA} for non-extended VAs, i.e., enumerate possible values for~$\lvar$ at
each level without duplicates. 

We start as in Section~\ref{sec:extended} by computing the product DAG $G$ of
$\calA$ and of the input document $d = d_0 \cdots d_{n-1}$ with 
vertex set $Q\times \{0,\dots,n\} \cup \{v_{\f}\}$ with $v_{\f} \colonequals
(\bullet, n+1)$ for some fresh value~$\bullet$, and with the following edge set:
  \begin{itemize}
    \item For every letter-transition $(q, a, q')$ of~$\calA$, for every $0 \leq
      i < \card{d}$ such that $d_i = a$, there is an $\epsilon$-edge from $(q, i)$ to $(q', i+1)$;
    \item For every variable-transition $(q, m, q')$ of~$\calA$ (where $m$ is a
      marker), for
      every $0 \leq i \leq \card{d}$, there is an edge from $(q, i)$ to~$(q',
      i)$ labeled with~$\{(m, i)\}$.
    \item For every final state $q \in F$, an $\epsilon$-edge from $(q, n)$
      to~$v_{\f}$.
  \end{itemize}
The initial vertex of~$G$ is $(q_0, 0)$ and the final vertex is
$v_{\f}$. Note that the edge labels are
now always singleton sets or~$\epsilon$; in
particular there are no longer any $\emptyset$-edges. 

We can then adapt most of Claim~\ref{clm:dagwellformed}: the product DAG is
acyclic because all letter-transitions make the second component increase, and
because we know that there cannot be a cycle of variable-transitions in the
input sequential VA $\calA$ (remember that we assume VAs to be trimmed).
We can also trim the mapping DAG in linear time as before, and
Claim~\ref{clm:dagcorrect} also adapts to show that the resulting mapping DAG correctly captures the mappings that we wish to enumerate.
Last, as in Claim~\ref{clm:leveled}, the resulting mapping DAG is still leveled,
the depth $\Depth$ (number of levels) is still $\card{d}+1$, and
the width $\Width$ (maximal size of
a level) is still~$\leq \card{Q}$; we will also define the \emph{complete width}
$\Widthfull$ of~$G$ in this section as the maximal size, over all levels $i$, of
the sum of the number of vertices with level~$i$ and of the number of
\emph{edges} with
a source vertex having level~$i$: clearly we have $\Widthfull \leq
\card{\calA}$.
The main change in Section~\ref{sec:extended} is that the mapping DAG is no
longer normalized, i.e., we may follow several marker edges in succession (staying at the same level) or follow several
$\epsilon$-edges in succession (moving to the next level each time).
Because of this, we change
Definition~\ref{def:levelset} and 
redefine \emph{level sets} to mean any non-empty set of vertices that are at the same
level.

We then reuse the enumeration approach of Section~\ref{sec:enum}
and~\ref{sec:jump}. Even though the mapping DAG is no longer normalized, 
it is not hard to see that with our new definition of level sets we can 
reuse the jump function from Section~\ref{sec:jump} as-is, and we can also reuse
the general approach of Algorithm~\ref{alg:main}.
However, to accommodate for the
different structure of the mapping DAG, we will need a new
definition for \nextlevel: instead of following
exactly one marker edge before an $\epsilon$-edge, we want to be
able to follow any (possibly empty) path of marker edges before an
$\epsilon$-edge. We formalize
this notion as an \emph{$S^+$-path}:

\begin{definition}
  \label{def:spath}
  For $S^+$ a set of labels, an \emph{$S^+$-path} in the mapping DAG $G$
  is a path of $\card{S^+}$ edges that includes no $\epsilon$-edges and where the
  labels of the path are exactly the elements of~$S^+$ in some arbitrary order.
  Recall that the definition of a mapping DAG ensures that there can be no
  duplicate labels on the path, and that the start and end vertices of an
  $S^+$-path must have the same level because no $\epsilon$-edge is traversed in
  the path.

  For $\Lambda$ a level set,
  $\nextlevel(\Lambda)$ is the set of all pairs $(S^+, \Lambda'')$ where:
  \begin{itemize}
    \item $S^+$ is a set of labels such that there is an $S^+$-path that
  goes from some vertex~$v$ of~$\Lambda$ to some
  vertex~$v'$ which has an outgoing $\epsilon$-edge;
\item $\Lambda''$ is the level set containing exactly
  the vertices $w$ that are targets of these $\epsilon$-edges, i.e., 
      there is an $S^+$-path from some vertex $v \in
  \Lambda$ to some vertex $v'$, and there is an $\epsilon$-edge from~$v'$ to~$w$.
  \end{itemize}
\end{definition}

Note that these definitions are exactly equivalent to what we would obtain if we
converted $\calA$ to
an extended VA and then used our original construction. This directly
implies that the modified enumeration algorithm is correct (i.e., 
Proposition~\ref{prp:correct} extends). In particular, the
modified algorithm still uses the jump pointers as computed in Section~\ref{sec:jump} to
jump over positions where the only possibility is $S^+ = \emptyset$, i.e.,
positions where the sequential VA make no variable-transitions.
The only thing that remains is to establish the delay bounds, for which we need to
enumerate \nextlevel efficiently without duplicates
(and replace Proposition~\ref{prp:extendedVA}).
To present our method for this, we will introduce
the
\emph{alphabet size} $\Absize$ as the maximal number, over all levels $j$ of the
mapping DAG $G$, of the different labels that can
occur in marker edges between vertices at level~$j$; in our
construction this value is bounded by the number of different markers, i.e., 
$\Absize \leq 2 \card{\calV}$. We can now state the claim:

\begin{theoremrep}\label{thm:generalVA}
  Given a leveled trimmed mapping DAG $G$ with complete width $\Widthfull$ and alphabet
  size $\Absize$, and a 
  level set
  $\Lambda'$,
  we can enumerate  without duplicates
  all the pairs $(S^+, \Lambda'') \in \nextlevel(\Lambda')$ 
  with delay $O(\Widthfull \times \Absize^2)$
  in an order such that
  $S^+ = \emptyset$ comes last if it is returned.
\end{theoremrep}

\begin{proof}
  Clearly if $\Lambda'$ is the singleton level set consisting only of the final
  vertex, then the set to enumerate is empty and there is nothing to do. Hence,
  in the sequel we assume that this is not the case.

  Let $p$ be the level of $\Lambda'$.  We call $\calK$ the set of possible
  labels at level~$p$, with $\card{\calK}$ being no greater than the alphabet
  size $\Absize$ of~$G$. We fix an arbitrary order
  $m_1< m_2 < \cdots < m_r$ on the elements of~$\calK$. Remember that we want to enumerate 
  $\nextlevel(\Lambda')$, i.e., 
  all pairs $(S^+, \Lambda'')$ of a subset
  $S^+$ of~$\calK$ such that there is an $S^+$-path in~$G$ from a vertex in
  $\Lambda'$ to
  a vertex $v'$ (which will be at level~$p$) with an outgoing $\epsilon$-edge;
  and the set $\Lambda''$ of the targets of these $\epsilon$-edges (at level~$p+1$).
  Let us consider the complete
  decision tree $T_\calK$ on~$m_1, \ldots, m_r$: it is a complete binary
  tree of height~$r + 1$, where, for all $1 \leq i \leq r$, every edge
  at height~$i$ is labeled with $+m_i$ if it is a right child edge and
  with $-m_i$ otherwise. For every node $n$ in the tree, we consider
  the path from the root of~$T_\calK$ to~$n$, and call the \emph{positive
    set} $P_n$ of~$n$ the labels $m$ such that $+m$ appears
  in the path, and the \emph{negative set}~$N_n$ of~$n$ the labels
  $m$ such that $-m$ appears in the path: it is immediate
  that for every node~$n$ of~$T_\calK$ the sets $P_n$ and $N_n$ are a partition
  of $\{m_1, \ldots m_j\}$ where
  $0 \leq j \leq r$ is the depth of~$n$ in~$T_\calK$.

  We say that a node $n$ of~$T_\calK$ is \emph{good} if there is some $P_n/N_n$-path
  in~$G$ starting at a vertex of $\Lambda'$ and 
  leading to a vertex which has an outgoing $\epsilon$-edge.
  Our goal of determining $\nextlevel(\Lambda')$ can then be rephrased as finding the set of all
  positive sets $P_n$ for all good leaves~$n$ of~$T_\calK$ (and the
  corresponding level set~$\Lambda''$), because 
  there is a clear one-to-one correspondence that sends each subset
  $S\subseteq\calK$ to a leaf $n$ of~$T_\calK$ such that $P_n = S$.

  Observe now that we can use Lemma~\ref{lem:extendToPath}
  to determine in time $O(\card{\Widthfull}\times \card{\calK})$, given a node~$n$
  of~$T_\calK$, whether it is good or bad: call the procedure
  on the subgraph of $G$ that is induced by level $p$ (it has size $\leq
  \Widthfull$) and with the sets
  $S^+ \colonequals P_n$
  and 
  $S^- \colonequals N_n$,
  then check in~$G$ whether one of the vertices returned by the procedure 
  has an outgoing $\epsilon$-edge.
  A naive solution to find the good leaves would then be to test them one by one using
  Lemma~\ref{lem:extendToPath}; but a more efficient idea is to use 
  the structure of~$T_\calK$ and the following facts:
  \begin{itemize}
    \item \emph{The root of~$T_\calK$ is always good.}
      Indeed, $G$ is trimmed, so we know that any $v \in \Lambda'$ has a path to
      some
      $\epsilon$-edge.
\item \emph{If a node is good then all its ancestors are good.}
  Indeed, if $n'$ is
  an ancestor of~$n$, and there is a $P_n/N_n$-path in~$G$ starting at a vertex
  of~$\Lambda'$, then this path is also a $P_{n'}/N_{n'}$ path, because
  $P_{n'}\subseteq P_n$ and $N_{n'} \subseteq N_n$. 
\item \emph{If a node $n'$ is good, then it must have at least one good descendant
  leaf $n$.} Indeed, taking any $P_{n'}/N_{n'}$-path 
  that witnesses that~$n'$ is good, we can take the leaf $n$ to be such that
      $P_n \supseteq P_{n'}$ is exactly the set of labels that occur on the
      path, so that the same path witnesses that~$n$ is indeed good.
  \end{itemize}
  Our flashlight search algorithm will rely on these facts.
  We explore~$T_\calK$ depth-first, constructing it on-the-fly as we visit it,
  and we use Lemma~\ref{lem:extendToPath}
  to guide our search: 
  at a node~$n$ of~$T_\calK$ (inductively assumed
  to be good), we call Lemma~\ref{lem:extendToPath} on
  its two children to determine which of them are good (from the facts above, at
  least one of them must be), and we explore recursively the first
  good child, and then the second good child if there is one.
  When the two children are good, we first explore the child labeled $+m$ before
  exploring the child labeled~$-m$: this ensures that if the empty set is produced
  as a label set in 
  $\nextlevel(\Lambda')$ then we always enumerate it last, as we should.
  Once we reach a leaf~$n$ (inductively assumed to be good) then we output its
  positive set of labels $P_n$.

  It is clear that the algorithm only enumerates label sets which occur in
  $\nextlevel(\Lambda')$.
  What is
  more, as the set of good nodes is upwards-closed in~$T_\calK$, the
  depth-first exploration visits all good nodes of~$T_\calK$, so it visits
  all good leaves and produces all label sets that should occur in
  $\nextlevel(\Lambda')$.
  Now, the delay is
  bounded by $O(\card{\Widthfull} \times \card{\calK}^2)$: indeed, whenever we
  are exploring at any node~$n$, we know that the next good leaf will
  be reached in at most $2 \card{\calK}$ calls to the procedure of
  Lemma~\ref{lem:extendToPath}, and we know that the subgraph of~$G$ induced by
  level~$p$ has size bounded by the complete width $\Widthfull$ of~$G$ so each call
  takes time $O(\card{\Widthfull} \times
  \card{\calK})$, including the time needed to verify
  if any of the reachable vertices~$v'$ has an outgoing $\epsilon$-edge: this
  establishes the delay bound of $O(\card{\Widthfull} \times B^2)$ that we
  claimed. Last, while doing
  this verification, we can produce the set $\Lambda''$ of the targets of these
  edges in the same time bound. This
  set~$\Lambda''$ is correct because any such vertex $v'$ has an outgoing
  $\epsilon$-edge and there is a $P_n/N_n$-path from some vertex $v \in \Lambda'$
  to~$v'$. Now, as $P_n \cup N_n = \calK$ and the path cannot traverse an $\epsilon$-edge, then these paths are
  actually $P_n$-paths (i.e., they exactly use the labels in~$P_n$),
  so~$\Lambda''$ is indeed the set that we wanted
  to produce according to Definition~\ref{def:spath}.
  This concludes the proof.
\end{proof}

With this runtime, the delay of Theorem~\ref{thm:dag} becomes $O((r+1) \times
(\Width^2 + \Widthfull \times \Absize^2))$, and we know that
$\Widthfull
\leq \card{\calA}$, that $\Width \leq \card{Q}$, that $r \leq \card{\calV}$, and
that $\Absize \leq 2 \card{\calV}$; so this leads to the overall
delay of $O(\card{\calV} \times (\card{Q}^2 + \card{\calA} \times \card{\calV}^2))$
in Theorem~\ref{thm:main}.

The idea to prove Theorem~\ref{thm:generalVA} is to use a general
approach called \emph{flashlight
  search}~\cite{mary2016efficient,read1975bounds}:
  we will use a search tree on the possible sets of labels on~$\calV$ to iteratively
construct the set~$S^+$ that can be assigned at the current position, and we will avoid useless parts of the
search tree by using a lemma to efficiently check if a
partial set of labels can be extended to a solution. To formalize the notion of
extending a partial set, we will need the notion of \emph{$S^+/S^-$-paths}:

\begin{definition}
  For $S^-$ and $S^+$ two disjoint sets of labels,
  an \emph{$S^+/S^-$-path} in the mapping DAG $G$ is a path of edges that
  includes no $\epsilon$-edges, that includes no edges with a label in~$S^-$,
  and where every label of~$S^+$ is seen exactly once along the path.
\end{definition}

Note that, when $S^+
\cup S^-$ contains all labels used in~$G$,
then the notions of $S^+/S^-$-path and $S^+$-path coincide, but if $G$ contains
some labels not in $S^+ \cup S^-$ then an $S^+/S^-$-path is free to use them
or not, whereas an $S^+$-path cannot use them. The key to prove Theorem~\ref{thm:generalVA}
is to efficiently determine if
$S^+/S^-$-paths exist:
we formalize this as a lemma
which we will apply to the mapping DAG $G$
restricted to the current level (in particular removing $\epsilon$-edges):

\begin{lemmarep}\label{lem:extendToPath}
  Let $G$ be a mapping DAG with no $\epsilon$-edges
  and let $V$ be its vertex set. Given
  a non-empty set $\Lambda' \subseteq V$ of vertices of~$G$ and 
  given two disjoint sets of labels $S^+$ and $S^-$, 
  we can compute in time $O(\card{G} \times \card{S^+})$ the set
  $\Lambda'_2 \subseteq V$ of vertices $v$ such that there is an $S^+/S^-$-path
  from one vertex
  of~$\Lambda'$ to~$v$.
\end{lemmarep}

\begin{proofsketch}
  We first delete all edges from~$G$ with a label in~$S^-$, add a fresh
  source vertex~$s_0$, and remove all vertices that are not reachable from~$s_0$.
  We then follow a topological sort of~$G$ to annotate each vertex~$v$
  with
  the maximal set of labels of~$S^+$ that can be seen along paths 
  from~$s_0$ to~$v$: and we use a failure annotation~$\emptyset$ when there are
  two such paths that
  can see two incomparable sets of labels of~$S^+$. Indeed, as we argue, when this happens
  the vertex~$v$ can never be
  part of an~$S^+/S^-$-path because the definition of~$G$ imposes that each edge label
  occurs at most once on any path, so the partial paths from~$s_0$ to~$v$ can
  never be completed with all missing labels from~$S^+$.
  Hence, we can compute our set~$\Lambda'_2$ 
  simply by returning all the vertices annotated by the whole set~$S^+$.
\end{proofsketch}

\begin{proof}
  In a first step, we delete from~$G$ all edges with a label which is in~$S^-$.
  This ensures that no path that we consider contains any label from $S^-$.
  Hence, we can completely ignore~$S^-$ in what follows.

  In a second step, we add a fresh source vertex~$s_0$ and edges with a fresh
  label $l_0$ from~$s_0$ to each vertex in~$\Lambda'$, we add~$l_0$ to~$S^+$, and we set
  $\Lambda' \colonequals \{s\}$. This allows us to assume that the set~$\Lambda'$ is a
  singleton $\{s\}$.

  In a third step, we traverse~$G$ in linear time from $s_0$ with a
  breadth-first search to remove all vertices that are not reachable
  from~$s_0$. Hence, we can now assume that every vertex in~$G$ is
  reachable from~$s_0$; in particular every vertex except $s_0$ has at least one
  predecessor.

  Now, we follow a reverse topological order on~$G$ to give
  a label $\chi(w) \subseteq S^+$ to each vertex $w \in V$ 
  with predecessors $w_1, \ldots, w_p$
  and to give
  a label $\chi(w_i, w) \subseteq S^+$ to each edge $(w_i, w)$ of~$G$,
  as follows:
  \begin{align*}
    \chi(s)\quad&\colonequals\quad \emptyset\\
    \chi(w_i,w)\quad&\colonequals\quad \large(\chi(w_i) \cup \{\mu(w_i,w)\}\large) \cap S^+ \\
    \chi(w)\quad&\colonequals\quad\begin{cases}
      \chi(w_i,w) & \text{if $w$ has a predecessor $w_i$ with } \chi(w_i,w) =
      \bigcup_{1 \leq j \leq p} \chi(w_j,w) \\
      \emptyset & \text{otherwise} \\
    \end{cases}
  \end{align*}

  The topological order can be computed in time $O(\card{G})$ by~\cite[Section
  22.4]{CormenLRS09}, and computing~$\chi$  along this order
  takes time $O(\card{G} \times \card{S^+})$.

  Intuitively, the labels assigned to a vertex $w$ or an edge $(w_i,w)$
  correspond to the subset of labels from $S^+$ that are read on a
  path starting at $s_0$ and using $w$ as the last vertex (resp., $(w_i,w)$ as last
  edge). However, we explicitly label a vertex $w$ with
  $\emptyset$ if there are two paths starting at $s_0$ that have seen a
  different subset of $S^+$ to reach~$w$.
  Indeed, as we know that any label can occur at most once on
  each path, such vertices and edges can never be part of a path that
  contains all labels from $S^+$. We will formalize this intuition below.
  
  We claim that, for every vertex~$v$, there is an $S^+/S^-$-path from~$s_0$ to~$v$
  if and only if $\chi(v)= S^+$.
  First assume that $\chi(v)= S^+$. We construct a path $P$ by going backwards starting from $v$. We initialize the current vertex $w$ to be~$v$. Now, as long as $\chi(w)$ is non-empty, we pick a predecessor $w_i$ with $\chi(w_i,w) = \chi(w)$, and we know that either $\chi(w_i,w) = \chi(w_i)$ or $\chi(w_i,w) = \chi(w_i) \cup \{\mu(w_i, w)\}$ with $\mu(w_i,w) \in S^+$, and then we assign $w_i$ as our current vertex~$w$. We repeat this process until we reach a current vertex~$w_0$ with $\chi(w) = \emptyset$, which must eventually happen: the DAG is acyclic, and all vertices except~$s_0$ must have a predecessor, and we know by definition that $\chi(s) = \emptyset$. As all elements of~$S^+$ were in~$\chi(w)$, they were all witnessed on $P$, so we know that $P$ is an $S^+/S^-$-path from~$w_0$ to~$v$. Now, we know that there is a path $P'$ from~$s_0$ to~$w_0$ thanks to our third preprocessing step, and we know that $P'$ uses no elements from~$S^-$ by our assumption on the DAG; so the concatenation of $P'$ and $P$ is an $S^+/S^-$-path from~$s_0$ to~$v$.

  For the other direction, assume that there is an $S^+/S^-$-path
  $P= v_1, \ldots, v_r$ from~$v_1 = s_0$ to a vertex~$v_r = v$.
  We show by induction that
  $\chi(v_i)$ contains all labels that have been seen so far on the path from $s_0$
  to $v_i$. For $v_1=s_0$ this is true by definition. For $i>1$, we
  claim that $\chi(v_i)=\chi(v_{i-1}, v_i)$. By way of contradiction,
  assume this were not the case. Then there is an $x\in S^+$ that
  appears in $\chi(v_{i-1}', v_i)$ for some predecessor $v_{i-1}' \neq v_{i-1}$
  of~$v_i$, but does not appear in~$\chi(v_{i-1}, v_i)$, so that $x$ does not appear on the path $v_1\ldots v_i$. 
  But then $x$ cannot appear on the path $v_i\ldots v_r$ either: indeed the fact that $x \in \chi(v_{i-1}', v_i)$ clearly means that there is a path from~$s_0$ to~$v_i$ (via~$v_{i-1}'$) where the label~$x$ appears. Now, 
  as $x$ can occur only once on every path of~$G$, it cannot also appear on the path $v_i\ldots v_r$ that starts at~$v_i$.
  Hence, $x$ does not appear
  in the path 
  $P$ at all, which contradicts the fact that $P$ is an~$S^+/S^-$-path.
  Thus we have indeed
  $\chi(v_i)=\chi(v_{i-1}, v_i)$. But then, since all elements of
  $S^+$ appear on edges in $P$ and are thus added iteratively in the
  construction of the $\chi(v_i)$, we have $S^+=\chi(v_r)= \chi(v)$ as
  desired.

  Hence, once we have computed the labeling~$\chi$, we can compute in
  time $O(\card{G} \times \card{S^+})$ the set~$\Lambda'_2$ by simply
  finding all vertices~$v$ with $\chi(v) = S^+$. This concludes the
  proof.
\end{proof}

We can now use Lemma~\ref{lem:extendToPath}
to prove Theorem~\ref{thm:generalVA}:

\begin{inlineproof}[Proof sketch of Theorem~\ref{thm:generalVA}]
  We restrict our attention to the level $\lvl(\Lambda')$ of the mapping DAG~$G$
  that contains the input level set~$\Lambda'$: in particular we remove
  all $\epsilon$-edges. The resulting mapping DAG has size at most~$\Widthfull$,
  and we call $\calK$ the set
  of labels that it uses, whose cardinality is at most the alphabet size~$\Absize$
  of~$G$. We fix some arbitrary order on~$\calK$.
  Now, let us consider the full decision
  tree $T_\calK$ on~$\calK$ following this order: it is a complete
  binary tree of height $\card{\calK}$, each internal node at depth~$0 \leq
  r < \card{\calK}$ has two children reflecting on whether we take the $r$-th
  label of~$\calK$ or not, and each leaf~$n$ corresponds to a subset of~$\calK$
  built according to the choices described on the path from the root
  of~$T_\calK$ to~$n$. Our algorithm will explore $T_\calK$ to find the
  sets~$S^+$ of labels that we must enumerate for~$\Lambda'$ and~$G$.

  More precisely, we wish to determine the leaves of~$T_\calK$ that correspond to a set~$S^+$
  such that there is an~$S^+$-path in~$G$ from a vertex of~$\Lambda'$ to a vertex with
  an outgoing $\epsilon$-edge: we call this a \emph{good} leaf. The naive way to
  find the good leaves would be to test them one after the other, but this would
  not ensure a good delay bound. Instead, we use the notion of $S^+/S^-$-paths
  to only explore the relevant parts of~$T_\calK$. Following this idea,
  we say that an internal node~$n$ 
  at depth $0 \leq r < \card{\calK}$ of~$T_\calK$ is \emph{good} if there is
  an $S^+/S^-$ path from a vertex of~$\Lambda'$ to a vertex with an outgoing
  $\epsilon$-edge, where $S^+$ and $S^-$ respectively contain the labels of~$\calK$
  that we decided to take and those that we decided not to take when going from
  the root of~$T_\calK$ to~$n$. Note that $S^+, S^-$ is a partition of the
  $r$ first labels of~$\calK$ that uniquely defines~$n$.

  We can now use Lemma~\ref{lem:extendToPath} as an oracle to determine, given
  any node~$n$ of the tree, whether $n$ is good in this sense or not. This
  oracle makes it possible to find the good leaves of~$T_\calK$ 
  efficiently, by starting at the root of~$T_\calK$ and
  doing a depth-first exploration of good nodes of the tree. 
  We build~$T_\calK$ on-the-fly while doing so, to avoid
  materializing irrelevant parts of the tree. The exploration is guaranteed to find all
  good leaves, because the root of the tree is always good, and because the
  ancestors of a good leaf are always good.
  Further, it ensures that we always find one new good leaf after at most
  $O(\card{\calK})$ invocations of Lemma~\ref{lem:extendToPath}, because
  whenever we are at a good node then it must have a good child and therefore, by induction, a good descendant that is a leaf. We
  will find this leaf in our depth-first search with a number of oracle calls that is at
  most linear in the height of~$T_\calK$. Together with the delay bound of
  Lemma~\ref{lem:extendToPath}, this yields the claimed delay bound of
  $O(\card{\Widthfull} \times \Absize^2)$.

  Last, it is clear that whenever we have found a good leaf corresponding to a
  set~$S^+$, then we can compute the new level set $\Lambda''$ that we must
  return together with~$S^+$, with the same delay bound.
  Indeed, we can simply do this by post-processing the set
  of vertices returned by the corresponding invocation of
  Lemma~\ref{lem:extendToPath}.
\end{inlineproof}

\subparagraph*{Memory usage.}
The recursion depth of Algorithm~\ref{alg:main} 
on general sequential VAs is unchanged, and we can still
eliminate tail recursion for the case $\lvar=\emptyset$ as we did in Section~\ref{sec:enum}.

The local space must now include the local space used by the enumeration scheme
of \nextlevel, of which there is an instance running at every level
on the stack. We need to remember our current position in the binary search
tree: assuming that the order of labels is fixed, it suffices to remember the
current positive set $P_n$ plus the last label in the order on~$\calK$ that we
use, with all other labels being implicitly in~$N_n$. This means that we store one label per level (the last
label), plus the positive labels, so their total number in the stack is at
most the total number of markers, i.e., $O(\card{\calV})$. Hence 
the structure of Theorem~\ref{thm:generalVA} has no effect on the memory usage.

The space usage must also include the space used for one call to the
construction of Lemma~\ref{lem:extendToPath}, only one instance of which is
running at every given time. This space usage is clearly in $O(\card{Q}
\times \card{V})$, so this additive term has again no impact on the memory
usage. Hence, the memory usage of our enumeration algorithm is the same as in
Section~\ref{sec:enum}, i.e.,
$O((r+1) \times \Width)$,
or $O((\card{\calV}+1) \times \card{Q})$ in terms of the VA.

\section{Conclusion}
\label{sec:conc}

We have shown that we can efficiently enumerate the mappings of sequential
variable-set automata on input documents, achieving linear-time preprocessing
and constant-delay in data complexity, while ensuring that preprocessing and
delay are polynomial in the input VA even if it is not deterministic.
This result was previously considered as unlikely by~\cite{FlorenzanoRUVV18}, and
it improves on the algorithms in~\cite{FreydenbergerKP18}:
with our algorithm, the delay between outputs does not depend on the input document,
whereas it had a linear dependency on the size of the input document in~\cite{FreydenbergerKP18}.

We will consider different directions for future works.
A first question 
is how to cope with changes to the input document
without recomputing our enumeration index structure from scratch.
This question has been recently studied for other enumeration algorithms, see
e.g.~\cite{amarilli2018enumeration,BerkholzKS17b,BerkholzKS17,BerkholzKS18,losemann2014mso,Niewerth18,NiewerthS18},
but for atomic update operations: insertion, deletion, and relabelings of single
nodes. However, as spanners 
operate on text, we would like 
to use bulk update operations that modify large parts of the text at once:
cut and paste operations, splitting or joining strings,
or appending at the end of a file and removing from the beginning, e.g., in the case of log files with rotation.
It may be possible to show better bounds for these operations than the ones obtained by modifying each individual letter~\cite{NiewerthS18,losemann2014mso}.

A second question is to generalize our result from words to trees, but this is
challenging: the run of a tree automaton is no longer linear in just one
direction, so it is not easy to skip parts of the input similarly to the jump
function of Section~\ref{sec:jump}, or to combine computation that occurs in
different branches.
We believe that these difficulties can be solved and that a similar result can
be shown for trees, but that the resulting algorithm is far more complex: this
point, and the question of updates, are explored in our follow-up
work~\cite{amarilli2019enumeration}.

Finally, it would be interesting to implement our algorithms and evaluate them
on real-world data similarly to the work in~\cite{ArenasMRV16,Morciano17}.
We believe that our techniques are rather simple and easily implementable, at least in the
case of extended VAs.
Moreover, since there are no large hidden constants in any of our constructions, we feel that they might be feasible in practice. Nevertheless, an efficient implementation would of course have to optimize implementation details that we could gloss over in our theoretical analysis since they make no difference in theory but might change practical behavior substantially.

\bibliography{main}

\begin{thebibliography}{CLRS09}

\bibitem[CLRS09]{CormenLRS09}
Thomas~H. Cormen, Charles~E. Leiserson, Ronald~L. Rivest, and Clifford Stein.
\newblock {\em Introduction to Algorithms}.
\newblock The MIT Press, 3rd edition, 2009.

\end{thebibliography}


\begin{thebibliography}{10}

\bibitem{AhoHU74}
Alfred~V. Aho, John~E. Hopcroft, and Jeffrey~D. Ullman.
\newblock {\em The design and analysis of computer algorithms}.
\newblock Addison-Wesley, 1974.

\bibitem{amarilli2017circuit_extended}
Antoine Amarilli, Pierre Bourhis, Louis Jachiet, and Stefan Mengel.
\newblock A circuit-based approach to efficient enumeration.
\newblock In {\em \mbox{\href{http://icalp17.mimuw.edu.pl/}{ICALP}}}, 2017.
\newblock URL: \url{https://arxiv.org/abs/1702.05589}.

\bibitem{amarilli2018enumeration}
Antoine Amarilli, Pierre Bourhis, and Stefan Mengel.
\newblock Enumeration on trees under relabelings.
\newblock In {\em ICDT}, 2018.
\newblock URL: \url{https://arxiv.org/abs/1709.06185}.

\bibitem{amarilli2019enumeration}
Antoine Amarilli, Pierre Bourhis, Stefan Mengel, and Matthias Niewerth.
\newblock Enumeration on trees with tractable combined complexity and efficient
  updates.
\newblock Under review, 2019.
\newblock URL: \url{https://arxiv.org/abs/1812.09519}.

\bibitem{ArenasMRV16}
Marcelo Arenas, Francisco Maturana, Cristian Riveros, and Domagoj Vrgoc.
\newblock A framework for annotating {CSV}-like data.
\newblock {\em {PVLDB}}, 9(11), 2016.
\newblock URL: \url{http://www.vldb.org/pvldb/vol9/p876-arenas.pdf}.

\bibitem{bagan2006mso}
Guillaume Bagan.
\newblock {MSO} queries on tree decomposable structures are computable with
  linear delay.
\newblock In {\em CSL}, 2006.

\bibitem{BerkholzKS17b}
Christoph Berkholz, Jens Keppeler, and Nicole Schweikardt.
\newblock Answering conjunctive queries under updates.
\newblock In {\em {PODS}}, 2017.
\newblock URL: \url{https://arxiv.org/abs/1702.06370}.

\bibitem{BerkholzKS17}
Christoph Berkholz, Jens Keppeler, and Nicole Schweikardt.
\newblock Answering {FO+MOD} queries under updates on bounded degree databases.
\newblock In {\em {ICDT}}, 2017.
\newblock URL: \url{https://arxiv.org/abs/1702.08764}.

\bibitem{BerkholzKS18}
Christoph Berkholz, Jens Keppeler, and Nicole Schweikardt.
\newblock Answering {UCQ}s under updates and in the presence of integrity
  constraints.
\newblock In {\em {ICDT}}, 2018.
\newblock URL: \url{https://arxiv.org/abs/1709.10039}.

\bibitem{FaginKRV15}
Ronald Fagin, Benny Kimelfeld, Frederick Reiss, and Stijn Vansummeren.
\newblock Document spanners: {A} formal approach to information extraction.
\newblock {\em J. {ACM}}, 62(2), 2015.
\newblock URL:
  \url{https://pdfs.semanticscholar.org/8df0/ad1c6aa0df93e58071b8afe3371a16a3182f.pdf}.

\bibitem{FlorenzanoRUVV18}
Fernando Florenzano, Cristian Riveros, Mart{\'{\i}}n Ugarte, Stijn Vansummeren,
  and Domagoj Vrgoc.
\newblock Constant delay algorithms for regular document spanners.
\newblock In {\em PODS}, 2018.
\newblock URL: \url{https://arxiv.org/abs/1803.05277}.

\bibitem{Freydenberger17b}
Dominik~D. Freydenberger.
\newblock A logic for document spanners.
\newblock Unpublished extended version.
\newblock URL: \url{http://ddfy.de/sci/splog.pdf}.

\bibitem{Freydenberger17}
Dominik~D. Freydenberger.
\newblock A logic for document spanners.
\newblock In {\em ICDT}, 2017.
\newblock URL: \url{http://drops.dagstuhl.de/opus/volltexte/2017/7049/}.

\bibitem{FreydenbergerH18}
Dominik~D. Freydenberger and Mario Holldack.
\newblock Document spanners: From expressive power to decision problems.
\newblock {\em Theory Comput. Syst.}, 62(4), 2018.
\newblock URL: \url{https://doi.org/10.1007/s00224-017-9770-0}.

\bibitem{FreydenbergerKP18}
Dominik~D. Freydenberger, Benny Kimelfeld, and Liat Peterfreund.
\newblock Joining extractions of regular expressions.
\newblock In {\em PODS}, 2018.
\newblock URL: \url{https://arxiv.org/abs/1703.10350}.

\bibitem{Gall12}
Fran{\c{c}}ois~Le Gall.
\newblock Improved output-sensitive quantum algorithms for boolean matrix
  multiplication.
\newblock In {\em {SODA}}, 2012.
\newblock URL:
  \url{https://pdfs.semanticscholar.org/91a5/dd90ed43a6e8f55f8ec18ceead7dd0a6e988.pdf}.

\bibitem{Gall14a}
Fran{\c{c}}ois~Le Gall.
\newblock Powers of tensors and fast matrix multiplication.
\newblock In {\em {ISSAC}}, 2014.
\newblock URL: \url{https://arxiv.org/abs/1401.7714}.

\bibitem{grandjean1996sorting}
\'Etienne Grandjean.
\newblock Sorting, linear time and the satisfiability problem.
\newblock {\em Annals of Mathematics and Artificial Intelligence}, 16(1), 1996.

\bibitem{losemann2014mso}
Katja Losemann and Wim Martens.
\newblock {MSO} queries on trees: {E}numerating answers under updates.
\newblock In {\em CSL-LICS}, 2014.
\newblock URL:
  \url{http://www.theoinf.uni-bayreuth.de/download/lics14-preprint.pdf}.

\bibitem{mary2016efficient}
Arnaud Mary and Yann Strozecki.
\newblock Efficient enumeration of solutions produced by closure operations.
\newblock In {\em {STACS}}, 2016.
\newblock URL: \url{http://drops.dagstuhl.de/opus/volltexte/2016/5753/}.

\bibitem{MaturanaRV18}
Francisco Maturana, Cristian Riveros, and Domagoj Vrgoc.
\newblock Document spanners for extracting incomplete information:
  Expressiveness and complexity.
\newblock In {\em PODS}, 2018.
\newblock URL: \url{https://arxiv.org/abs/1707.00827}.

\bibitem{Morciano17}
Andrea Morciano.
\newblock Engineering a runtime system for {AQL}.
\newblock Master's thesis, Politecnico di Milano, 2017.
\newblock URL:
  \url{https://www.politesi.polimi.it/bitstream/10589/135034/1/2017_07_Morciano.pdf}.

\bibitem{Niewerth18}
Matthias Niewerth.
\newblock {MSO} queries on trees: Enumerating answers under updates using
  forest algebras.
\newblock In {\em LICS}, 2018.
\newblock URL: \url{http://doi.acm.org/10.1145/3209108.3209144}.

\bibitem{NiewerthS18}
Matthias Niewerth and Luc Segoufin.
\newblock Enumeration of {MSO} queries on strings with constant delay and
  logarithmic updates.
\newblock In {\em PODS}, 2018.
\newblock URL:
  \url{http://www.di.ens.fr/~segoufin/Papers/Mypapers/enum-update-words.pdf}.

\bibitem{read1975bounds}
Ronald~C. Read and Robert~E. Tarjan.
\newblock Bounds on backtrack algorithms for listing cycles, paths, and
  spanning trees.
\newblock {\em Networks}, 5(3), 1975.

\bibitem{systemT}
IBM Research.
\newblock {S}ystem{T}, 2018.
\newblock URL:
  \url{https://researcher.watson.ibm.com/researcher/view_group.php?id=1264}.

\bibitem{Segoufin14}
Luc Segoufin.
\newblock A glimpse on constant delay enumeration (invited talk).
\newblock In {\em {STACS}}, 2014.
\newblock URL: \url{https://hal.inria.fr/hal-01070893/document}.

\bibitem{tsukiyama1977new}
Shuji Tsukiyama, Mikio Ide, Hiromu Ariyoshi, and I~Shirakawa.
\newblock A new algorithm for generating all the maximal independent sets.
\newblock {\em SIAM J. Comput.}, 6, 09 1977.
\newblock \href {https://doi.org/10.1137/0206036} {\path{doi:10.1137/0206036}}.

\bibitem{valiant1979complexity}
L.G. Valiant.
\newblock The complexity of computing the permanent.
\newblock {\em Theoretical Computer Science}, 8(2), 1979.
\newblock URL:
  \url{https://www.sciencedirect.com/science/article/pii/0304397579900446}.

\bibitem{Wasa16}
Kunihiro Wasa.
\newblock Enumeration of enumeration algorithms.
\newblock {\em CoRR}, 2016.
\newblock URL: \url{https://arxiv.org/abs/1605.05102}.

\end{thebibliography}
\end{document}